\theoremstyle{plain}
\newtheorem{theorem}{Theorem}[section]
\newtheorem{lemma}[theorem]{Lemma}
\newtheorem{proposition}[theorem]{Proposition}
\newtheorem{corollary}[theorem]{Corollary}
\theoremstyle{definition}
\newtheorem{definition}[theorem]{Definition}
\newtheorem{example}[theorem]{Example}
\theoremstyle{remark}
\newtheorem{remark}[theorem]{Remark}
\newtheorem*{notation*}{Notation}
\newcommand{\F}{\mathbb{Z}_2}
\newcommand{\Z}{\mathbb{Z}}
\newcommand{\D}{\mathbb{D}}
\newcommand{\ord}[1]{\mathsf{ord}\left(#1\right)}
\newcommand{\ket}[1]{|#1\rangle}
\newcommand{\cnot}{\mathrm{CNOT}}
\renewcommand{\vec}[1]{\mathbf{#1}}
\newcommand{\x}{\vec{x}}
\newcommand{\y}{\vec{y}}
\newcommand{\0}{\vec{0}}
\newcommand{\sop}[1]{\llbracket #1 \rrbracket}
\renewcommand{\int}[1]{\left[#1\right]}
\newcommand{\lift}[1]{\overline{#1}}
\newcommand{\reduces}{-->}
\newcommand{\reducestrans}{-->*}
\newcommand{\pushright}[1]{\ifmeasuring@#1\else\omit\hfill$\displaystyle#1$\fi\ignorespaces}
\newcommand{\pushleft}[1]{\ifmeasuring@#1\else\omit$\displaystyle#1$\hfill\fi\ignorespaces}
\tikzset{
    node style bb/.style={rectangle,fill=black!20},
    node style ba/.style={red,rectangle,fill=black!20},
    node style aa/.style={red}
}
\title{Towards Large-scale Functional Verification of  \\ Universal Quantum Circuits}
\author{Matthew Amy
	\institute{Institute for Quantum Computing and David R. Cheriton School of Computer Science \\ University of Waterloo, Canada}
	\email{meamy@uwaterloo.ca}
}
\begin{document}

\maketitle

\begin{abstract}
We introduce a framework for the formal specification and verification of quantum circuits based on the Feynman path integral. Our formalism, built around exponential sums of polynomial functions, provides a structured and natural way of specifying quantum operations, particularly for quantum implementations of classical functions. Verification of circuits over all levels of the Clifford hierarchy with respect to either a specification or reference circuit is enabled by a novel rewrite system for exponential sums with free variables. Our algorithm is further shown to give a polynomial-time decision procedure for checking the equivalence of Clifford group circuits. We evaluate our methods by performing automated verification of optimized Clifford+$T$ circuits with up to 100 qubits and thousands of $T$ gates, as well as the functional verification of quantum algorithms using hundreds of qubits. Our experiments culminate in the automated verification of the Hidden Shift algorithm for a class of Boolean functions in a fraction of the time it has taken recent algorithms to simulate.
\end{abstract}

%%%%%%%%%%%%%%%%%%%%%%%%%%%%%%%%%%%%%%%%
\section{Introduction}

Verification is a fundamental aspect of modern electronic design. Without a high level of assurance that a circuit design conforms to a particular specification, chip makers stand to lose hundreds of millions of dollars when their product is inevitably recalled. The consequences in the quantum computing realm aren't quite as clear, as the largely software-like nature of quantum circuits alleviates much of the risk associated with design flaws. On the other hand, quantum resource analyses, which typically vary wildly between compilers \cite{IARPAQCS}, are currently being used to assess and guide real security policies \cite{glrs16, adgmps16}, so it is highly desirable to attain some degree of assurance that these resource analyses are indeed correct.

Due to the absence of large, universal quantum computers and the inherent difficulty of simulating quantum circuits, testing is generally not a viable option for verification. By contrast, various methods of formal verification have been developed for quantum circuits and programs, including equivalence checking \cite{w09, y10, agn14}, diagrammatic methods \cite{dl13, gd17}, model checkers \cite{gnp08, z16}, program logics \cite{y12} and formal proof \cite{rpz18}. However, two questions remain: how can the intended effect of a quantum program be specified in a clear, human readable and verifiable way, and how can we scale automated verification to large circuits?

Typical \emph{functional} verification methods -- verification of the precise input-output relation -- either verify equivalence against a simpler circuit or diagrammatic implementation (e.g., \cite{w09, y10, gd17}), or a matrix representation such as a unitary or superoperator (e.g., \cite{rpz18}). With either approach, errors can creep in \emph{on the specification side}, as both circuit and matrix presentations can be difficult for humans to write and understand. Moreover, in the former case it is assumed that a certified implementation exists in the first place, and in the latter case the matrix either requires exponential space to write and store, or is left abstract \cite{rpz18}, relying on structural proofs which are generally not suitable for verifying heavily optimized circuits.

In this work we propose a novel framework for the formal specification and functional verification of unitary (i.e., measurement-free) quantum circuits over a universal gate set -- specifically, the Clifford group extended with $Z$-axis rotations taken from the Clifford hierarchy \cite{gc99}. Our framework is built around Richard Feynman's \emph{path integral} technique, which has been used recently to prove results in complexity theory \cite{dhmhno05, m17}, and to perform circuit simulation \cite{bg16, kps17} and optimization \cite{amm14, am16, aam17}. Specifically, we develop a concrete representation of quantum operators as \emph{path-sums} -- exponential sums of basis states over a finite set of Boolean \emph{path variables}. Our path-sums directly coincide with the standard mathematical presentation of common quantum circuits and algorithms (e.g., \cite{nc00}), and further allow the direct use of classical functions, which can themselves be tested or otherwise verified, to formally specify quantum operations.

To verify quantum circuits, we give a computable, compositional semantics of quantum circuits as path-sums. We show that over Clifford+$R_k$ circuits for any fixed $k$, this interpretation is efficiently computable and compact. We then present a reduction system for path-sums which iteratively reduces the number of path variables until a (non-unique) normal form is reached. Our reduction system together with an efficient initial transformation is complete for Clifford group circuits, giving a polynomial-time equivalence checking algorithm. Experimentally, we use our reduction system to perform the automated verification of optimized Clifford+$T$ circuits, as well as Clifford+$R_k$ implementations of various quantum algorithms against formal specifications as path-sums for up to 200 qubits.

%The rest of the paper is structured as follows: \Cref{sec:sop} introduces our framework and the path-sum representation of quantum circuits. \Cref{sec:rewrite} gives a reduction system for path-sums based on rewriting exponential sums. \Cref{sec:completeness} gives some additional heuristics as well as a polynomial-time decision procedure for determining equivalence of Clifford circuits. Finally \Cref{sec:experiments} gives experimental results.

\paragraph{Preliminaries}

We work in the strictly unitary picture of quantum computing \cite{nc00} -- that is, quantum computations are modelled by unitary operators on a complex vector space of dimension $2^n$. While we do not consider measurements, we allow qubit initialization, corresponding to partial isometries on a complex vector space. We denote the computational basis vectors as $\ket{\x}$ for binary strings $\x=x_1x_2\dots x_n\in\F^n$.

A circuit is defined as a sequence of quantum gates applied to individual qubits. We primarily consider three quantum gates:
\[
	H = \frac{1}{\sqrt{2}}\begin{pmatrix} 1 & 1 \\ 1 & -1 \end{pmatrix}, \quad
	R_k = \begin{pmatrix} 1 & 0 \\ 0 & e^{\frac{2\pi i}{2^k}}  \end{pmatrix}, \quad \text{and} \quad
	\cnot = \begin{pmatrix} 1 & 0 & 0 & 0 \\ 0 & 1 & 0 & 0 \\ 0 & 0 & 0 & 1 \\ 0 & 0 & 1 & 0 \end{pmatrix}.
\]
For $k\geq 1$, all three gates lie in the $k^{\text{th}}$ level of the \emph{Clifford hierarchy}, denoted $\mathcal{C}_k$, where $\mathcal{C}_1$ is the Pauli group and $\mathcal{C}_k=\{U|U\mathcal{C}_1 U^\dagger \subseteq \mathcal{C}_{k-1}\}$. Two important cases are the \emph{Clifford group} ($\mathcal{C}_2$) and \emph{Clifford+$T$} ($\mathcal{C}_3$). While for $k\leq 3$ the above gates suffice to generate $\mathcal{C}_k$, it is not generally known whether $\mathcal{C}_k=\langle H, R_k, \cnot\rangle$.

Much of our formalism involves polynomial representations of pseudo-Boolean functions -- functions from $\F^n$ into some set $S$. In particular, we are interested in pseudo-Boolean functions into the ring of \emph{dyadic fractions} $\D=\{\frac{a}{2^b} | a, b \in \Z\}$, which correspond uniquely to multilinear polynomials in $\D_M[\x]=\D[\x]/\langle x_i^2 - x_i\rangle$. In our context the ring of dyadic fractions arises from the phase factors of $R_k$ gates, and are needed to precisely represent the quantum Fourier transform.

%%%%%%%%%%%%%%%%%%%%%%%%%%%%%%%%%%%%%%%%
\section{The path-sum framework}\label{sec:sop}

\begin{figure}
\centerline{\scalebox{1}{
    \begin{tikzpicture}
    \node at (7.4,3) {$B$};
    \node at (0,0)    {$A$};
    \draw [thick, decoration={markings, mark=at position 0.7 with {\arrow{triangle 60}}}, postaction=decorate] (0.2 ,0.2) to 
    	[curve through ={([out angle=45, in angle=200]3.4,2)  . . (5.8,0.7) .. (6.5,2.3)  }] (7.2,2.8);% curve 
    \draw [thick, decoration={markings, mark=at position 0.4 with {\arrow{triangle 60}}}, postaction=decorate] (0.2 ,0.2) to 
    	[curve through ={([out angle=45, in angle=180]3.4,1.4)  . . (4.2,1.2)  }] (7.2,2.8);% curve 
    \draw [thick, decoration={markings, mark=at position 0.4 with {\arrow{triangle 60}}}, postaction=decorate] (0.2 ,0.2) to 
    	[curve through ={(1.5,3)  . . (4.3,2.2) . . (6.5,3)  }] (7.2,2.8);% curve 
    \end{tikzpicture}
}}
\caption{The paths of a particle from point $A$ to $B$.}
\label{fig:pathintegral}
\end{figure}

The path-sum dates back to Feynman and the path integral formulation of quantum mechanics \cite{fh65}. In a general sense, the idea is to describe the amplitude of a particular state (say, of a particle) by an integral over all possible paths leading to that state. \Cref{fig:pathintegral} shows the trajectories of a particle moving from states $A$ to $B$ -- in the path integral formulation, the final amplitude is described as the sum of the amplitudes of each path. The output amplitudes of a quantum circuit, as a quantum mechanical system, can likewise be described as the sum over all trajectories of the system. However, as quantum gates are typically modelled as operators on a finite dimensional Hilbert space, a discrete sum rather than integral is typically used \cite{dhmhno05, bvr08, m17, kps17}.

We can describe a path-sum abstractly as a discrete set of \emph{paths} $S\subseteq \F^m$, together with an amplitude function $\phi$ and state transformation $f$ representing the operator
\[
	U : \ket{\x} \mapsto \sum_{\y\in S}\phi(\x, \y)\ket{f(\x, \y)}.
\]
In this form, the path-sum is not particularly useful as a computational representation, as the representations of $\phi$ and $f$ are not fixed -- indeed $\phi$ itself may be a unitary matrix with $\phi(\x, \y)$ indexing a particular entry. Instead, we fix a concrete representation based on multivariate polynomials which suffices to exactly represent most interesting quantum operations.

\begin{definition}[path-sum]\label{def:sop}
	An $n$-qubit \emph{path-sum} $\xi$ consists of
	\begin{itemize}
		\item an \emph{input signature} $\ket{\x=x_1x_2\cdots x_n}$  where each $x_i$ is a (distinct) variable or Boolean constant,
		\item a \emph{phase polynomial} $P\in\D_M[\x,\y]$ over input variables $\x$ and \emph{path variables} $\y=y_1y_2\dots y_m$, and
		\item an \emph{output signature} $\ket{f(\x,\y)=f_1(\x,\y)\cdots f_n(\x,\y)}$ where each $f_i\in\F[\x,\y]$ is a Boolean polynomial.
	\end{itemize}
	The \emph{associated operator} of a path-sum is the partial linear map $U_{\xi}$ where
	\[
		U_{\xi} : \ket{\x} \mapsto \frac{1}{\sqrt{2^m}}\sum_{\y\in\F^m} e^{2\pi i P(\x,\y)}\ket{f(\x,\y)}.
	\]
\end{definition}

We say a path variable is \emph{internal} if it does not appear in the output signature. Our presentation is inspired by descriptions of quantum operators in mathematical texts \cite{nc00, klm07}, and as such we write a path-sum informally by the action of its associated operator.
By an abuse of notation, we use $\ket{\x}$ to refer to either an input signature or an arbitrary Boolean vector corresponding to an input signature. 

\begin{example}
Path-sum representations of common quantum gates and circuits are listed below:
\begin{align*}
	T: &\ket{x} \mapsto e^{2\pi i\frac{x}{8}}\ket{x} \\
	H:&\ket{x}\mapsto\frac{1}{\sqrt{2}}\sum_{y\in\F}e^{2\pi i\frac{xy}{2}}\ket{y} \\
	\textsf{Toffoli}_n: &\ket{x_1x_2\cdots x_n} \mapsto \ket{x_1x_2\cdots (x_n\oplus \textstyle \prod_{i=1}^{n-1}x_i)} \\
	\textsf{Adder}_n: &\ket{\x}\ket{\y}\ket{\0}\mapsto \ket{\x}\ket{\y}\ket{\x+ \y} \\
	\textsf{QFT}_n: &\ket{\x}\mapsto \frac{1}{\sqrt{2^n}}\sum_{\y\in\F^n}e^{2\pi i \frac{\int{\x\cdot \y}}{2^n}}\ket{\y}
\end{align*}
Addition and multiplication of Boolean vectors are interpreted as integer operations at the bit level. In the \textsf{QFT} above, $\int{\x \cdot \y}$ denotes the integer value of $\x \cdot \y$. For any classical function $f$, we can lift the polynomial representation of $f$ to a quantum operator via the path-sum $\ket{\x}\ket{\0}\mapsto \ket{\x}\ket{f(\x)}$. Note that the polynomial representation of a classical function may grow exponentially large, as in the case of addition. A practical implementation of path-sums as a specification language would include a classical sub-language, along with a verified translation from such programs into Boolean polynomials.
\end{example}

As a unitary or partial isometry may admit many distinct path-sum representations, we define an equivalence between path-sums with the same associated operator.

\begin{definition}[equivalence] 
 Two path-sums $\xi_1, \xi_2$ are \emph{equivalent}, denoted $\xi_1 \equiv \xi_2$, if and only if their associated operators are equal -- that is, $U_{\xi_1} = U_{\xi_2}$.
\end{definition}

An additional point to note is that non-isometric path-sums are possible in our framework, as for instance $\ket{x}\mapsto\ket{0}$ is a valid path-sum. In this work we are concerned only with the unitary circuit model and by extension isometric path-sums, hence we define a notion of well-formedness for path-sums.

\begin{definition}[well-formed]
	A path-sum is well-formed if its associated operator is a (partial) isometry.
\end{definition}

In practice, well-formedness is only an issue when writing path-sums directly as specifications, and our verification methods work even when a path-sum is not guaranteed to be well-formed. We leave it as a question for future research to determine methods for checking well-formedness of path-sums.

\subsection{Compositions of path-sums}

As with quantum circuits, path-sums may be composed both \emph{vertically} and \emph{horizontally} -- that is, composed in parallel with another path-sum on a distinct subsystem or in sequence on the same subsystem, respectively. Vertical composition is defined in the obvious way -- concatenating the inputs and outputs then adding the phase polynomials with appropriate renaming -- but horizontal composition requires more care.

Intuitively, as path-sums symbolically describe mappings between linear combinations of basis vectors, we can compose the output $\ket{f(\x, \y)}$ of one path-sum with the input $\ket{\x'}$ of another by substituting each input value $x_i'$ with the corresponding output $f_i(\x,\y)$. For instance, we can compute the composition of $\ket{x_1x_2x_3}\mapsto\ket{x_1(x_1\oplus x_2)x_3}$ followed by $\ket{x_1'x_2'x_3'}\mapsto\ket{x_1'x_2'(x_2'\oplus x_3')}$ by substituting $x_2'$ with $x_1\oplus x_2$:
\[
	\ket{x_1x_2x_3}\mapsto\ket{x_1(x_1\oplus x_2)(x_1\oplus x_2\oplus x_3)}.
\] However, this presents a problem when the path-sum on the left (i.e. right-to-left composition) is a partial isometry, as we may end up composing a variable $f_i(\x, \y)=x_j$ with a constant state $x_i'=b$ for some $b\in\F$, effectively post-selecting on $x_j=b$. For this reason we require that only \emph{compatible}\footnote{Determining compatibility is at least as hard as detecting whether an ancilla is \emph{clean} and is hence non-trivial in general. For the verification tasks we consider this is not an issue, as in practice we only compose path-sums with unitary operators.} signatures are composed; in particular, an output $\ket{f(\x,\y)}$ is compatible with an input $\ket{\x'}$ if and only if for every $i$, either $x_i'$ is a variable or $x_i'=b=f_i(\x, \y)$.

When the left-most path-sum has a non-zero phase polynomial, substitutions may extend to the phase. As the phase and output polynomials are defined over different rings ($\D$ and $\F$, respectively), when substituting a variable with a Boolean polynomial in the phase we first need to lift it into a \emph{functionally equivalent} polynomial over $\D$. For instance, for all $x,y\in\F$, $\frac{1}{4}\left( x \oplus y\right) = \frac{1}{4}x + \frac{1}{4}y - \frac{1}{2}xy$. We define the \emph{lifting} of a Boolean polynomial $P$ to a polynomial $\lift{P}\in\D_M[\x]$ recursively by
\begin{align*}
	\lift{\x^\alpha}&=\x^\alpha, \\  \lift{P + Q} &= \lift{P} + \lift{Q} - 2\lift{PQ},
\end{align*}
where $\x^\alpha=x_1^{\alpha_1}x_2^{\alpha_2}\cdots x_n^{\alpha_n}$ for $\alpha\in\F^n$ is a multi-index, and the first equation uses the natural inclusion of $\F$ in $\D$. It can be easily verified that the lifting of a Boolean polynomial preserves its action on elements of $\F$.

\begin{lemma}\label{lem:poly}
For any Boolean-valued polynomial $P$ and all $\x\in\F^n$, $\lift{P}(\x) = P(\x) \mod 2$.
\end{lemma}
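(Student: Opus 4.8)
The plan is to prove the statement by structural induction on the Boolean polynomial $P$, following exactly the recursive definition of the lifting operator. Recall that every element of $\D_M[\x]$ (and in particular every Boolean polynomial viewed in $\F[\x]$) can be written as a sum of monomials $\x^\alpha$, and the lifting is defined on generators $\x^\alpha$ and extended through the rule $\lift{P+Q} = \lift{P} + \lift{Q} - 2\lift{PQ}$. So the two cases to handle are the base case of a single monomial and the inductive step for a sum.

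For the base case, let $P = \x^\alpha$ be a monomial. Then $\lift{\x^\alpha} = \x^\alpha$ by definition, where on the right we use the inclusion $\F \hookrightarrow \D$. Evaluating at $\x \in \F^n$, the value $\x^\alpha = \prod_i x_i^{\alpha_i}$ is already an element of $\{0,1\} \subseteq \D$, and reducing it mod $2$ gives back the same bit, which is precisely the value of $P$ in $\F$. (One minor subtlety worth a line: since $x_i \in \{0,1\}$, we have $x_i^{\alpha_i} = x_i$ whenever $\alpha_i = 1$, so the monomial is idempotent and its integer value coincides with its $\F$-value; no reduction is actually needed, but the mod-$2$ statement holds trivially.)

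For the inductive step, suppose $P$ and $Q$ are Boolean polynomials for which the claim holds, i.e., for all $\x \in \F^n$, $\lift{P}(\x) \equiv P(\x) \pmod 2$ and $\lift{Q}(\x) \equiv Q(\x) \pmod 2$. We also need the claim for the product $PQ$, which again follows by induction (the product of two Boolean polynomials expands into a sum of monomials, each handled by the base case, and then the sum rule). Then for $\x \in \F^n$,
\[
\lift{P+Q}(\x) = \lift{P}(\x) + \lift{Q}(\x) - 2\lift{PQ}(\x) \equiv \lift{P}(\x) + \lift{Q}(\x) \pmod 2 \equiv P(\x) + Q(\x) \pmod 2,
\]
and $P(\x) + Q(\x) \bmod 2$ is exactly the value of the sum $P+Q$ computed in $\F = \Z_2$. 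This closes the induction.

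I do not expect any serious obstacle here; the result is essentially a bookkeeping exercise verifying that the recursive lifting rule mirrors the identity $a \oplus b = a + b - 2ab$ valid for bits $a,b \in \{0,1\}$, and that this identity is compatible with reduction mod $2$ (the $-2ab$ term vanishes). The only point requiring a little care is making the induction well-founded: one should induct on a suitable measure — e.g., the number of monomials in $P$ (with ties broken by total degree, to handle the product $PQ$ which can have more monomials but whose structure still reduces to the monomial base case) — or equivalently phrase it as: the claim holds for all monomials (base case), and is preserved under the sum operation that builds arbitrary polynomials from monomials. Either framing gives the result; the product case is subsumed because $PQ$, once expanded into monomials, is handled by the base case together with the sum rule.
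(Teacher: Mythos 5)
Your proposal is correct: the paper gives no proof of this lemma (it is introduced only with the remark that the claim ``can be easily verified''), and your structural induction on the recursive definition of the lifting --- base case on monomials, inductive step using the fact that the $-2\lift{PQ}$ term vanishes mod $2$ (for which you only need $\lift{PQ}(\x)\in\Z$, not the full inductive hypothesis for $PQ$) --- is exactly the intended verification. One small caution: your ``number of monomials, ties broken by total degree'' measure does not actually decrease under the recursive call on $PQ$ in general, but your alternative framing (peel off one monomial at a time, so the product term $m\cdot P'$ has strictly fewer monomials) does make the induction well-founded and is the right way to phrase it.
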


We can now formally define the functional composition of path-sums.

\begin{definition}{(sequential composition)} \\
The \emph{sequential composition} of two compatible path-sums 
\[
	U_{\xi} : \ket{\x} \mapsto \frac{1}{\sqrt{2^{m}}}\sum_{\y\in\F^{m}} e^{2\pi i P(\x,\y)}\ket{f(\x,\y)}, \qquad 
	U_{\xi'} : \ket{\x'} \mapsto \frac{1}{\sqrt{2^{m'}}}\sum_{\y'\in\F^{m'}} e^{2\pi i P'(\x,\y)}\ket{f'(\x',\y')},
\]
denoted $\xi'\circ \xi$, is given by
\[
	U_{\xi'\circ \xi} : \ket{\x} \mapsto \frac{1}{\sqrt{2^{m+m'}}}\sum_{\y\in\F^{m+m'}}
		e^{2\pi i \left(P + P'[y_i \gets y_{i+m}][x'_i \gets \lift{f_i}]\right)(\x,\y)}\ket{\left(f'[x'_i\gets f_i]\right)(\x,\y)},
\]
where $P[x \gets Q]$ for polynomials $P,Q$ over some ring $R$ denotes the substitution of $x$ with $Q$ in $P$.
\end{definition}

\begin{proposition}
For any well-formed, compatible path-sums $\xi, \xi'$, $\xi'\circ\xi$ is also well formed. Moreover, 
\[U_{\xi'\circ \xi} = U_{\xi'}U_{\xi}.\]
\end{proposition}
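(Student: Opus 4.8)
The plan is to establish the operator identity $U_{\xi'\circ\xi}=U_{\xi'}U_\xi$ first, and then derive well-formedness of $\xi'\circ\xi$ from it together with the compatibility hypothesis. For the identity I would argue directly on computational basis states. Fix a valid input $\ket{\x}$ of $\xi$. Applying $U_\xi$ gives $\frac{1}{\sqrt{2^m}}\sum_{\y\in\F^m}e^{2\pi i P(\x,\y)}\ket{f(\x,\y)}$, and compatibility guarantees that every basis state $\ket{f(\x,\y)}$ occurring here is a valid input of $\xi'$, so $U_{\xi'}$ (extended linearly to this span) may be applied term by term; collecting the two sums yields
\[
	U_{\xi'}U_\xi\ket{\x} = \frac{1}{\sqrt{2^{m+m'}}}\sum_{\y\in\F^m}\sum_{\y'\in\F^{m'}} e^{2\pi i\bigl(P(\x,\y)+P'(f(\x,\y),\y')\bigr)}\ket{f'(f(\x,\y),\y')}.
\]
It then remains to check that each summand in the definition of $U_{\xi'\circ\xi}$, evaluated at Boolean arguments, matches the corresponding term above once the reindexing $y_i\gets y_{i+m}$ is used to identify $\y'$ with $(y_{m+1},\dots,y_{m+m'})$ while $(y_1,\dots,y_m)$ plays the role of $\y$.

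For the output signature this is the routine fact that polynomial substitution commutes with evaluation in $\F[\x,\y]$, giving $(f'[x'_i\gets f_i])(\x,\y)=f'(f(\x,\y),\y')$. For the phase the only subtlety is that $f_i\in\F[\x,\y]$ cannot be substituted directly into $P'\in\D_M[\x',\y']$, since the two live over different rings; instead one substitutes the lift $\lift{f_i}\in\D_M[\x,\y]$, and \Cref{lem:poly} gives $\lift{f_i}(\x,\y)=f_i(\x,\y)\bmod 2$, so evaluating $P'[y_i\gets y_{i+m}][x'_i\gets\lift{f_i}]$ at Boolean $(\x,\y)$ returns exactly $P'(f(\x,\y),\y')$. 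Adding $P(\x,\y)$ recovers the phase above, and the identity on basis states — hence on the whole domain by linearity — follows.

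For well-formedness, note that a composition of partial isometries need not be a partial isometry, so the compatibility hypothesis is genuinely used here. Let $\mathcal{I}$ and $\mathcal{I}'$ denote the spans of the valid inputs of $\xi$ and $\xi'$. By well-formedness, $U_\xi$ restricts to an isometry from $\mathcal{I}$ onto its image $\mathcal{R}=\mathrm{span}\{\ket{f(\x,\y)}\}$, and $U_{\xi'}$ restricts to an isometry on $\mathcal{I}'$. Compatibility says exactly that whenever a component $x'_i$ of $\xi'$'s input signature is a constant $b$, the polynomial $f_i$ is the constant $b$, so every $\ket{f(\x,\y)}$ agrees with $\xi'$'s input signature on its constant positions; that is, $\mathcal{R}\subseteq\mathcal{I}'$. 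Hence the restriction of $U_{\xi'}$ to $\mathcal{R}$ is still an isometry, and since the input signature of $\xi'\circ\xi$ is that of $\xi$, the map $U_{\xi'\circ\xi}=U_{\xi'}U_\xi$ is an isometry on $\mathcal{I}$, being the composition of the isometry $U_\xi\colon\mathcal{I}\to\mathcal{R}$ with the isometry $U_{\xi'}$ restricted to $\mathcal{R}\subseteq\mathcal{I}'$, and vanishes elsewhere; so it is a partial isometry and $\xi'\circ\xi$ is well-formed. The main obstacle is not the calculation itself but its two delicate points: making the cross-ring substitution in the phase legitimate (the purpose of \Cref{lem:poly}), and recognising that well-formedness of the composite does not follow from that of the factors alone but requires the inclusion $\mathcal{R}\subseteq\mathcal{I}'$ supplied by compatibility.
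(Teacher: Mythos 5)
The paper states this proposition without proof, so there is nothing to compare against; your argument is correct and fills in exactly the intended reasoning — a direct computation of $U_{\xi'}U_{\xi}$ on basis states matched term-by-term against the definition of $\xi'\circ\xi$, with \cref{lem:poly} justifying the substitution of $\lift{f_i}$ into the phase polynomial over $\D$. Your observation that compatibility is what forces the image of $U_{\xi}$ into the domain of $U_{\xi'}$ (so that composing the two partial isometries again yields a partial isometry) is the right, and only nontrivial, point in the well-formedness claim.
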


\begin{remark}
A useful property of path-sums is that they unify structurally equivalent circuits without resorting to string diagrams, which can be difficult to reason about in automated ways \cite{bgksz16}. By this we mean that circuits which are equivalent up to symmetric monoidal laws are \emph{strictly equal} in the path-sum picture. For instance, the bifunctoriality law and the naturality of SWAP, stated respectively as the equivalences
\[
	\Qcircuit @C=1em @R=.4em {
 		& \gate{f} & \qw  & \qw \\
 		& \qw & \gate{g} & \qw
	} 
	\raisebox{-0.9em}{$\equiv$}
	\Qcircuit @C=1em @R=.4em {
 		& \qw & \gate{f}  & \qw \\
 		& \gate{g} & \qw & \qw
	}
	\qquad \qquad
	\Qcircuit @C=1em @R=.9em {
 		& \gate{f} & \qswap  & \qw \\
 		& \qw & \qswap \qwx & \qw
	} 
	\raisebox{-0.9em}{$\equiv$}
	\Qcircuit @C=1em @R=.9em {
 		& \qswap & \qw  & \qw \\
 		& \qswap \qwx & \gate{f} & \qw
	}
\]
are both equality in the path-sum framework. While much progress has been made towards computational methods for diagrammatic reasoning \cite{dl13, bgksz16, cdkw16, gd17}, our framework allows us to use standard algebraic tools (e.g., rewriting) without explicitly managing structural laws.

Along with unifying the representation of \emph{structurally} equivalent circuits, path-sums further unify many \emph{semantic} equivalences of quantum circuits -- particularly allowing the long-distance cancellation of phase gates applied to the same logical state \cite{amm14}. In contrast, matrix representations unify \emph{all} equivalences between unitaries, at the expense of exponential space. Path-sums hence provide an intermediary model, where many equivalences are ``modded out'' yet still generally remain efficiently representable as we show next.
\end{remark}

\subsection{Path-sums as a circuit semantics}

As path-sums admit both a symmetric tensor product and functional composition, we can give a simple compositional path-sum semantics of measurement-free quantum circuits. Given a path-sum representation of each gate in a basis $\mathcal{B}$ and their inverses, we can define the path-sum interpretation of a circuit over $\mathcal{B}$ as the composition of each gate. In particular, we give a path-sum interpretation to the Clifford+$R_k$ basis $\{H, \cnot, R_k\}$ for $k>0$.

\begin{definition}{(Clifford+$R_k$ path-sum)} \\
The path-sum interpretation of an $n$-qubit circuit $C$ over $\{H, \cnot, R_k\}$, denoted $\sop{C}$, is defined as follows:
\begin{alignat*}{2}
	\sop{H} &= \ket{x}\mapsto\frac{1}{\sqrt{2}}\sum_{y\in\{0,1\}}e^{2\pi i\frac{xy}{2}}\ket{y} \\
	\sop{R_k} &= \ket{x}\mapsto e^{2\pi i\frac{x}{2^k}}\ket{x} \\
	\sop{R_k^\dagger} &= \ket{x}\mapsto e^{2\pi i\frac{-x}{2^k}}\ket{x} \\
	\sop{\cnot} &= \ket{x_1x_2}\mapsto \ket{x_1(x_1\oplus x_2)}\\
	\sop{C_1; C_2} &= \sop{C_2} \circ \sop{C_1}.
\end{alignat*}
We leave the appropriate vertical compositions implicit.
\end{definition}

\begin{proposition}
For any circuit $C$ over $\{H, \cnot, R_k\}$ with unitary matrix $U_C$, we have $U_{\sop{C}}=U_C$.
\end{proposition}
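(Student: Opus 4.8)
The plan is to prove the statement by structural induction on the circuit $C$, leveraging the earlier proposition that sequential composition of path-sums correctly implements operator composition ($U_{\xi'\circ\xi} = U_{\xi'}U_\xi$) together with the analogous (stated-as-obvious) fact for vertical/tensor composition.

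\textbf{Base cases.} First I would check each generator directly. For $R_k$, the path-sum $\ket{x}\mapsto e^{2\pi i x/2^k}\ket{x}$ has associated operator with matrix entries $\langle y|U|x\rangle = e^{2\pi i x/2^k}[y{=}x]$, which is exactly $\diag(1, e^{2\pi i/2^k})$; similarly $R_k^\dagger$ gives $\diag(1, e^{-2\pi i/2^k})$. For $\cnot$, the path-sum $\ket{x_1x_2}\mapsto\ket{x_1(x_1\oplus x_2)}$ is a permutation of basis states implementing exactly the stated $4\times 4$ matrix (since $x_1\oplus x_2$ over $\F$ computes the controlled flip). For $H$, the associated operator sends $\ket{x}$ to $\frac{1}{\sqrt 2}\sum_{y\in\F} e^{2\pi i xy/2}\ket y = \frac{1}{\sqrt 2}\sum_y (-1)^{xy}\ket y$, whose matrix is $\frac{1}{\sqrt 2}\begin{pmatrix}1&1\\1&-1\end{pmatrix}$. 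These are all routine evaluations of $U_\xi$ at computational basis vectors.

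\textbf{Inductive step.} A circuit $C$ over $\{H,\cnot,R_k\}$ is built from generators by sequential composition $C_1;C_2$ and by placing gates on disjoint qubit subsystems (vertical composition, left implicit in the definition). For sequential composition, $\sop{C_1;C_2} = \sop{C_2}\circ\sop{C_1}$ by definition, so by the composition proposition $U_{\sop{C_1;C_2}} = U_{\sop{C_2}}U_{\sop{C_1}}$, which by the induction hypothesis equals $U_{C_2}U_{C_1} = U_{C_1;C_2}$ (matching the convention that $C_1;C_2$ means ``$C_1$ then $C_2$'', i.e. $U_{C_2}U_{C_1}$ as operators). For a gate acting on a subset of qubits, the full operator is the tensor product of the gate's matrix with identities on the remaining wires; the vertical composition of path-sums concatenates disjoint input/output signatures and adds (renamed) phase polynomials, and one checks that its associated operator is precisely this tensor product. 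Chaining these two closure properties through the induction completes the argument.

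\textbf{Main obstacle.} The only real subtlety is bookkeeping around conventions and well-definedness: making sure the ordering convention in $C_1;C_2$ lines up with operator multiplication order, that the implicit vertical compositions are handled uniformly (e.g. a single $\cnot$ on qubits $i,j$ of an $n$-qubit register is really $\sop{\cnot}$ tensored with $n-2$ identity path-sums $\ket{x}\mapsto\ket{x}$), and that variable renaming in compositions does not change the associated operator. None of this is deep, but it is where an informal argument could go wrong, so I would state the tensor-composition correctness as an explicit lemma (parallel to the sequential-composition proposition) before running the induction. Since each generator is Clifford+$R_k$ and the class is closed under the two composition operations, no issue of partiality arises here — every $\sop{C}$ for $C$ over $\{H,\cnot,R_k\}$ is in fact unitary.
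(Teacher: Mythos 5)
Your proposal is correct and matches the argument the paper intends: the paper states this proposition without proof, treating it as immediate from the direct verification of the generator path-sums against their matrices together with the compositionality proposition $U_{\xi'\circ\xi}=U_{\xi'}U_{\xi}$ (and the implicit vertical composition), which is exactly the induction you spell out. Your added care about the tensor/vertical composition lemma and ordering conventions is a reasonable tightening of what the paper leaves implicit, not a departure from it.
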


As a composition of linear Boolean functions, it can trivially be observed that each of the outputs of a canonical path-sum is linear. Moreover, its phase polynomial has degree at most $k$. To show this, we first introduce the notion of the \emph{order} of a polynomial in $\D_M$ which gives a more precise characterization of the phase polynomials over a fixed level of the Clifford hierarchy. Note that without loss of generality we can restrict our attention to phase polynomials with coefficients in $\D/\Z$ since $e^{2\pi i}=1$.

\begin{definition}
The \emph{order} of a term $\frac{a}{2^b} \x^{\alpha}$ where $a$ is co-prime to $2$ and $\alpha\in\F^n$ is $b + |\alpha| - 1$. The order of a polynomial $P\in\D_M[\x]$, denoted $\ord{P}$, is the maximum order of all terms in $P$.
\end{definition}

\begin{example}
	\[
		\ord{\frac{1}{2}} = 0, \qquad
		\ord{\frac{1}{2}x_1 + \frac{1}{2}x_2} = 1, \qquad
		\ord{\frac{1}{2^3}x_2 + \frac{1}{2}x_1x_2x_3} = 3
	\]
\end{example}

An important fact, shown below, is that order is non-increasing with respect to substitution of linear Boolean polynomials.

\begin{lemma}\label{lem:order}
	Let $P\in\D_M[\x]$, and let $Q\in\F[\x]$ be a linear polynomial. Then for any $x_i$,
	\[
		\ord{P[x_i\gets \lift{Q}]} \leq \ord{P}
	\]
\end{lemma}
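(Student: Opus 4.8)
The plan is to analyze how substituting a lifted linear polynomial $\lift{Q}$ for a variable $x_i$ transforms each monomial of $P$, and to argue term-by-term that no resulting monomial exceeds the order of the monomial it came from. The key structural fact to establish first is a clean description of $\lift{Q}$ for linear $Q$: if $Q = c_0 \oplus x_{j_1} \oplus \cdots \oplus x_{j_r}$ (with $c_0 \in \F$ a constant), then iterating the lifting rule $\lift{P+Q} = \lift P + \lift Q - 2\lift{PQ}$ gives a polynomial whose degree-$d$ monomials all carry coefficient $\pm 2^{d-1}$ (up to the constant term). Concretely, $\lift{x_{j_1} \oplus \cdots \oplus x_{j_r}} = \sum_{\emptyset \ne S \subseteq \{j_1,\dots,j_r\}} (-2)^{|S|-1} \x^{\mathbf{1}_S}$, which I would verify by a quick induction on $r$ using the defining recursion. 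The crucial numerical consequence: a monomial of degree $d$ in $\lift Q$ has order exactly $1 + d - 1 = d$ (its coefficient $\pm 2^{d-1}$ contributes $1-d$ to the "$b$" count after reducing mod $\Z$, so $\ord{} = (1-d) + d - 1$... let me be careful: coefficient $2^{d-1}$ over $2^0$, write as $\frac{a}{2^b}$ with $a$ odd and $b$ minimal — here $b$ can be negative, $\frac{2^{d-1}}{1} = \frac{2^{d-1}}{2^0}$, order $0 + d - 1 = d - 1$). So each degree-$d$ monomial of $\lift Q$ contributes order $d-1$, i.e. \emph{order strictly less than its degree by one} — this is exactly the slack we need.

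\textbf{Main argument.} Take a single term $t = \frac{a}{2^b}\,\x^\alpha$ of $P$ with $a$ odd, so $\ord t = b + |\alpha| - 1$. If $x_i \notin \x^\alpha$ the term is unchanged and there is nothing to prove, so assume $\alpha_i = 1$ and write $\x^\alpha = x_i \x^{\alpha'}$ with $|\alpha'| = |\alpha| - 1$. After substitution this term becomes $\frac{a}{2^b}\,\lift{Q}\cdot \x^{\alpha'}$ (then remultilinearized). Expanding $\lift Q$ via the formula above, the contribution splitting off the constant $c_0$ behaves like a $\pm 1$ coefficient (degree-0 piece), preserving order, and each nonconstant degree-$d$ piece $\pm 2^{d-1}\x^{\mathbf 1_S}$ produces, after multiplying by $\frac{a}{2^b}\x^{\alpha'}$, a term with coefficient $\pm a\, 2^{d-1-b}$ and a monomial of degree at most $|\alpha'| + d = |\alpha| - 1 + d$. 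Its order is therefore at most $(b - d + 1) + (|\alpha| - 1 + d) - 1 = b + |\alpha| - 1 = \ord t$, with equality only when $a\,2^{d-1-b}$ is odd, i.e. $b = d-1$. I would also note that remultilinearizing — collapsing $x_j^2 \to x_j$ when $j \in \supp(\alpha') \cap S$ — only \emph{decreases} the degree of a monomial while leaving its coefficient's 2-adic valuation unchanged, hence can only decrease order. Taking the maximum over all terms of $P$ and all pieces of each expansion yields $\ord{P[x_i \gets \lift Q]} \le \ord P$.

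\textbf{The delicate point} is bookkeeping the 2-adic valuations through the remultilinearization step: when the same variable appears in both $\x^{\alpha'}$ and in a subset $S$, the product $x_j \cdot x_j$ reduces to $x_j$, which \emph{lowers} the degree but does not change the coefficient, so one must confirm that order, $b + |\alpha| - 1$, genuinely goes down (or stays the same) rather than up. Since lowering degree lowers $|\alpha|$ while $b$ is untouched, order strictly decreases there — so overlaps only help. The one case to handle with care is the constant $c_0 = 1$: then $\lift Q$ has a degree-0 term $1 = \x^{\mathbf 0}$ with coefficient $1$, giving back a term $\pm\frac{a}{2^b}\x^{\alpha'}$ of order $b + |\alpha'| - 1 = b + |\alpha| - 2 < \ord t$, which is fine. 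I expect the whole proof to be a short induction-plus-casework argument once the closed form for $\lift Q$ is in hand; the only real obstacle is stating the valuation accounting cleanly enough that the $b \le d-1 \Rightarrow$ "order preserved, else strictly decreased" dichotomy is transparent.
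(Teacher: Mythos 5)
Your proposal is correct and follows essentially the same route as the paper's proof: the same closed form $\lift{\bigoplus_{j} x_j} = \sum_{\emptyset\neq S'}(-2)^{|S'|-1}\prod_{j\in S'}x_j$, followed by the same term-by-term valuation count showing each degree-$d$ piece trades a factor $2^{d-1}$ in the coefficient against at most $d$ extra degree, so order cannot increase. You are in fact slightly more careful than the paper, which silently assumes $Q$ has no constant term and does not comment on the multilinear reduction $x_j^2\to x_j$; both points you handle correctly (and both only decrease order).
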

\begin{proof}
Suppose $Q=\sum_{j\in S} x_j$ for some set $S$. It is easy to verify that
\[
	\lift{\sum_{j\in S} x_j} = \sum_{S'\subseteq S}(-2)^{|S'| - 1}\prod_{j\in S'}x_j.
\]
Substituting $\lift{Q}$ in for $x_i$ we see that for any term $\frac{a}{2^b} \x^{\alpha}$ in $P$ such that $\alpha_i=1$,
\begin{align*}
	\ord{\frac{a}{2^b} \x^{\alpha}[x_i\gets \lift{Q}]} 
		&=\max_{S'\subseteq S}\ord{\frac{a2^{|S'| - 1}}{2^b} \x^{\alpha}[x_i\gets \prod_{j\in S'}x_j]} \\
		&\leq\max_{S'\subseteq S} b-|S'|+|\alpha| + |S'| - 1 \\
		&= \ord{\frac{a}{2^b} \x^{\alpha}}.
\end{align*}
\end{proof}

Intuitively, as the output function of a Clifford+$R_k$ path-sum is strictly linear, composing Clifford+$R_k$ path-sums does not increase the order of the phase polynomial. Moreover, the path-sum interpretation of each gate over $\{H, \cnot, R_k\}$ has a phase polynomial of order at most $k$ and maximum denominator $2^k$, hence we obtain the following result.

\begin{proposition}
The phase polynomial of a (canonical) Clifford+$R_k$ path-sum has degree at most $k$.
\end{proposition}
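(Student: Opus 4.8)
The plan is to prove, by induction on the structure of the circuit $C$ (built from the generators $H,\cnot,R_k$ by sequential and vertical composition), the stronger invariant that $\sop{C}$ has a linear output signature and a phase polynomial of order at most $k$. Throughout we represent phase polynomials over $\D/\Z$, so every nonzero term $\frac{a}{2^b}\x^\alpha$ with $a$ coprime to $2$ has $b\geq 1$ (otherwise the term is an integer, hence $0$ modulo $\Z$). Granting the invariant, the proposition is immediate: for such a term $|\alpha| = (b+|\alpha|-1) - b + 1 = \ord{\tfrac{a}{2^b}\x^\alpha} - b + 1 \leq k - 1 + 1 = k$, so the phase polynomial has degree at most $k$.

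For the base cases we inspect the four generators. $\sop{\cnot}$ has zero phase and outputs $x_1$, $x_1\oplus x_2$, both linear; $\sop{R_k}$ and $\sop{R_k^\dagger}$ have phase $\pm\frac{x}{2^k}$ of order $k$ and output $x$; and $\sop{H}$ has phase $\frac{xy}{2}$ of order $2$, which is at most $k$ (as usual we take $k\geq 2$, the case $k=1$ with $R_1=Z$ being degenerate), and output $y$. The identity path-sum $\ket{x}\mapsto\ket{x}$ used to pad gates under vertical composition likewise satisfies the invariant.

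For the inductive step there are two cases. Vertical composition concatenates inputs and outputs and adds the two phase polynomials over disjoint variable sets, so the outputs remain linear and the order of the sum is the maximum of the two orders, hence $\leq k$. For sequential composition $\sop{C_2}\circ\sop{C_1}$, the new output is $f'[x'_i\gets f_i]$: since $f'$ is linear, each $f_i$ is linear, and an $\F$-sum of linear Boolean polynomials is again linear, the output stays linear. The new phase is $P_1 + P_2[y_i\gets y_{i+m}][x'_i\gets\lift{f_i}]$; renaming path variables leaves the order unchanged, repeated application of \Cref{lem:order} (legitimate because each $f_i$ is linear) gives $\ord{P_2[y_i\gets y_{i+m}][x'_i\gets\lift{f_i}]}\leq\ord{P_2}\leq k$, and it then remains to observe $\ord{P+Q}\leq\max(\ord P,\ord Q)$.

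This last fact is the one arithmetic point that needs care, and the place I expect the only genuine (if still elementary) difficulty. Collecting like terms in $P+Q$ introduces no new monomials, so it suffices to bound the order of a single combined coefficient: writing the coefficient of $\x^\alpha$ as $\frac{a}{2^b}+\frac{a'}{2^{b'}}$ with $b\leq b'$, over the common denominator $2^{b'}$ the numerator $a2^{b'-b}+a'$ is odd whenever $b<b'$ (leaving order $b'+|\alpha|-1$, unchanged), while if $b=b'$ any power of $2$ dividing $a+a'$ only lowers the denominator, and hence the order, with the term vanishing entirely when $a+a'\equiv 0$. In every case the order does not exceed $\max(\ord P,\ord Q)$. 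Assembling the pieces, $\sop{C}$ has phase polynomial of order at most $k$, and the degree bound follows as in the first paragraph.
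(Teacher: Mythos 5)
Your proof is correct and takes essentially the same route as the paper: the paper also argues that each generator has phase polynomial of order at most $k$, that the linear outputs together with \cref{lem:order} make order non-increasing under composition, and then reads off the degree bound from the order bound. You merely make explicit two steps the paper leaves implicit (that $\ord{P+Q}\leq\max(\ord{P},\ord{Q})$ and that order at most $k$ with denominators $b\geq 1$ modulo $\Z$ forces degree at most $k$), both of which you handle correctly.
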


\begin{corollary}\label{cor:size}
The path-sum interpretation of an $n$-qubit Clifford+$R_k$ circuit $C$ has size polynomial in the volume of $C$ ($n\cdot |C|$) and can be computed in polynomial time.
\end{corollary}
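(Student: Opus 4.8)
The plan is to track the size of the path-sum $\sop{C}$ as the circuit $C$ is processed gate-by-gate, showing that each of the $|C|$ sequential compositions adds only a polynomially-bounded amount to the representation and costs only polynomial time. I would measure the \emph{size} of a path-sum $\xi$ by three quantities: the number of qubits $n$, the number of path variables $m$, and the number of monomials in the phase polynomial $P$ together with the output polynomials $f_i$ (the latter being a sum of at most $n+m$ linear terms each, since by the preceding discussion the outputs of a canonical Clifford+$R_k$ path-sum are linear). The monomial count of $P$ is the delicate one, so the argument should focus there.

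First I would establish the base case: each generator $\sop{H}$, $\sop{R_k}$, $\sop{R_k^\dagger}$, $\sop{\cnot}$ (and their inverses) introduces at most one new path variable and a phase polynomial with $O(1)$ monomials, all of order at most $k$ and denominator dividing $2^k$. Summing over the $|C|$ gates, the total number of path variables is at most $|C|$, so $m = O(|C|)$. Next, for the inductive step, I would invoke the sequential-composition definition: composing $\xi'$ after $\xi$ renames the path variables of $\xi'$ (no blowup), substitutes each $x_i'$ in $f'$ by the linear polynomial $f_i$ (linear-in-linear substitution stays linear, with at most $n+m$ terms), and substitutes each $x_i'$ in $P'$ by the lift $\lift{f_i}$. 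The only place size can grow is this last substitution into the phase. Here is where I would use Proposition~\ref{lem:order} (order is non-increasing under substitution of linear Boolean polynomials) together with the fact, stated just before the corollary, that the phase polynomial of any canonical Clifford+$R_k$ path-sum has degree at most $k$: since the running phase polynomial has degree $\le k$ in a fixed set of $\le n + O(|C|)$ variables, and $k$ is a constant, the number of possible monomials is at most $\sum_{j=0}^{k}\binom{n + O(|C|)}{j} = O((n+|C|)^k)$, a polynomial in the volume $n\cdot|C|$. Moreover the denominator of every coefficient stays bounded by $2^k$ (order $\le k$ with the degree bound pins the denominator exponent to $\le k$), so each coefficient is stored in $O(1)$ bits. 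Hence after all $|C|$ compositions the phase polynomial — and therefore the whole path-sum — has size polynomial in $n\cdot|C|$.

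For the time bound I would note that each sequential composition touches $O(m)$ path variables and performs substitutions into polynomials of size bounded as above; a single linear substitution into $P'$, followed by re-expanding products via $x_i^2 = x_i$ and collecting like terms, costs time polynomial in the current representation size, which we have just shown is polynomially bounded throughout. Iterating this over the $|C|$ gates, and folding in the polynomial cost of the (linear) output substitutions, gives a total running time polynomial in $n\cdot|C|$, completing the proof.

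The main obstacle is the phase-polynomial blowup in the inductive step: a priori, substituting a linear polynomial with up to $n+m$ terms into a degree-$k$ monomial and expanding could multiply the monomial count, and iterating this naively across $|C|$ gates risks an exponential cascade. The key insight that defuses this is that the degree bound $k$ is \emph{invariant} under the compositions (by Proposition~\ref{lem:order} and the structure of the generators), so although individual substitutions expand, the result always collapses back into the fixed, polynomially-sized space of degree-$\le k$ multilinear monomials over the $O(n+|C|)$ live variables — there is simply no room for an exponential number of distinct monomials to accumulate.
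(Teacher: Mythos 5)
Your argument is correct and is exactly the reasoning the paper intends (the corollary is left unproved in the text, following immediately from the degree-$\le k$ proposition and Lemma~\ref{lem:order}): the phase polynomial lives in the space of multilinear monomials of degree at most $k$ over the $n+m$ input and path variables with $m\le|C|$, giving $O((n+|C|)^k)$ monomials with $O(1)$-bit coefficients, and each of the $|C|$ compositions is polynomial-time in that bounded representation. Your closing observation — that the degree bound, not the per-substitution expansion, is what prevents an exponential cascade — is precisely the right key point.
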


\paragraph{On representations of the phase polynomial}
While the representation of the phase as a multilinear polynomial is indeed polynomial in the size of the circuit, at higher levels of the Clifford hierarchy (i.e. large $k$) the degree of the polynomial can become prohibitively large. Even for the standard Clifford+$T$ gate set, the path-sum of a circuit requires space cubic in the volume of the circuit \cite{am16}. In practice this makes verification of some larger circuits difficult.

The phase polynomial could instead be represented in \emph{linear space for any $k$} by its \emph{Fourier expansion} \cite{od14, aam17}. This however complicates the process of verification as the Fourier expansion is not necessarily unique modulo integer multiples \cite{aam17}. A possible compromise would be to store the Fourier expansion normally, and generate the multilinear form for small subsets on demand. 

%%%%%%%%%%%%%%%%%%%%%%%%%%%%%%%%%%%%%%%%
\section{A calculus for path-sums}\label{sec:rewrite}
The verification question we're generally concerned with is \emph{given a circuit $C$ and path-sum $\xi$, is $\sop{C}\equiv \xi$?}. From an automated perspective it is simpler to instead check that the path-sum \emph{miter} \cite{y10} $\sop{C^\dagger}\circ \xi$ is the identity transformation. In either case, we need a method of efficiently establishing equivalence. To that end, in this section we present a system of reduction rules for path-sums. A key feature of our calculus is that the reduction rules strictly decrease the number of path variables, producing a (not necessarily unique) normal form in polynomial time.

\subsection{Overview}

Our calculus operates by reducing the number of paths when sets of paths interfere in recognizable ways which we call \emph{interference patterns}. As an illustration, consider the identity circuit $HH$. Computing its canonical path-sum we get
\[
	HH : \ket{x} \mapsto \frac{1}{\sqrt{2^2}}\sum_{y_1,y_2\in\F}e^{2\pi i\frac{xy_1+y_1y_2}{2}}\ket{y_2}.
\]
To see that the above path-sum is equal to the identity, we can first expand the exponential sum on the right by the values of the internal path variable $y_1$:

\begin{align*}
	\frac{1}{\sqrt{2^2}}\sum_{y_1,y_2\in\F}e^{2\pi i\frac{xy_1+y_1y_2}{2}}\ket{y_2}
		&= \frac{1}{\sqrt{2^2}}\sum_{y_2\in\F}(1 + e^{2\pi i\frac{x+y_2}{2}})\ket{y_2}
\end{align*}
Since $e^\pi i = -1$, it can be observed that if $x+y_2 = 0 \mod 2$, the two paths corresponding to $y_1=0$ and $y_2=1$ \emph{constructively} interfere, whereas if $x+y_2=1\mod 2$ they \emph{destructively} interfere. As $\F= x\oplus \F = \{x, 1\oplus x\}$ for any $x\in\Z$, we can rewrite the sum over $x\oplus \F$ and explicitly calculate the interference on either path:
\begin{align*}
	\frac{1}{\sqrt{2^2}}\sum_{y_2\in x\oplus \F}(1 + e^{2\pi i\frac{x+y_2}{2}})\ket{y_2} 
		&=\frac{1}{2}(1 + e^{2\pi i\frac{x+x}{2}})\ket{x} + 
			\frac{1}{2}(1 + e^{2\pi i\frac{x+1+x}{2}})\ket{1 \oplus x} \\
		&= \frac{2}{2}\ket{x} + \frac{0}{2}\ket{1\oplus x} \\
		&= \ket{x}
\end{align*}

The reasoning above applies to any situation where an internal path variable $y_i$ only appears with coefficients taken from the Boolean subgroup $\{0,\frac{1}{2}\}$ of $\D/\Z$, as the two branches of $y_i$ are identical, except that $y_i=1$ path picks up a multiplicative factor of $-1$ whenever the quotient of $P/y_i$ is odd. Specifically, it can be shown that
\begin{align*}
	\frac{1}{\sqrt{2^{m+1}}}\sum_{y_0\in\F}\sum_{\y\in\F^m}e^{2\pi i\left (\frac{1}{2} y_0Q(\x, \y) + R(\x, \y)\right)}\ket{f(\x, \y)}
		= \frac{1}{\sqrt{2^{m-1}}}\sum_{\y\in\F^m, Q(\x, \y) = 0\mod 2}e^{2\pi iR(\x, \y)}\ket{f(\x, \y)}
\end{align*}
Note that the polynomial $Q(\x, \y)$ is Boolean-valued, as otherwise the $y_0=1$ path can pick up values not in $\{1, -1\}$. In practice, we only perform such reductions when the restricted sum can be reified by solving $Q(\x, \y) = 0\mod 2$ for some path variable, as we did above with $y_2=x$.

\subsection{Reduction rules}

\Cref{fig:rewrite} gives the rules of our calculus, presented as algebraic rewrite rules on exponential sums for convenience and applied to path-sums in the obvious way. We write $\xi\reduces \xi'$ to denote that $\xi$ reduces to $\xi'$, and denote by $\reducestrans$ the transitive closure of $\reduces$. For all rules, $y_0$ is an internal path variable, quotients $Q$ are Boolean-valued and whenever $y_i\gets Q$, $y_i$ does not appear in $Q$. For the \textsf{[Case]} rule, both $y_i$ and $y_j$ are internal.

The rules were developed by translating known circuit identities into path-sums, then minimizing the identities to obtain simple interference patterns which 1) strictly reduce the number of path variables, and 2) can be efficiently matched. What we found was that most common Clifford+$T$ equalities reduce to a small set of rules -- in particular, the \textsf{[HH]} rule derived from the equality $HH=I$ as described above is sufficient for the vast majority of path-sum reductions. The \textsf{[$\omega$]} rule arises from the identity $(SH)^3 = e^{\frac{2\pi i}{8}}I$, and the final rule \textsf{[Case]} is a specific case distinction needed to prove the 2-qubit Clifford+$T$ identity $\left(\cnot(X\otimes T) \textsf{controlled-}H(X\otimes T^\dagger)\right)^2$ \cite{bs16}. The \textsf{[Elim]} rule only appears to simplify the presentation of \textsf{[HH]} as well as in some contexts specific to verification which we describe later.

\begin{figure}
\resizebox{\textwidth}{!}{\begin{minipage}{\linewidth}
\begin{alignat*}{3}
	&\frac{1}{\sqrt{2^{m+2}}}\sum_{y_0\in\F}\sum_{\y\in\F^m}e^{2\pi iP(\x, \y)}\ket{f(\x,\y)}
		&&\reduces\frac{1}{\sqrt{2^{m}}}\sum_{\y\in\F^m}e^{2\pi iP(\x, \y)}\ket{f(\x,\y)} 
			& \quad \textsf{[Elim]} \\
	&\frac{1}{\sqrt{2^{m+1}}}\sum_{y_0\in\F}\sum_{\y\in\F^m}
		e^{2\pi i \left(\frac{1}{4}y_0 + \frac{1}{2}y_0 Q(\x, \y) + R(\x, \y)\right)}\ket{f(\x,\y)} 
		&&\reduces \frac{1}{\sqrt{2^{m}}}\sum_{y\in\F^m}e^{2\pi i\left(\frac{1}{8} -\frac{1}{4}\lift{Q}(\x, \y) + R(\x, \y)\right)}\ket{f(\x,\y)} 
			& \textsf{[$\omega$]} \\
	&\frac{1}{\sqrt{2^{m+1}}}\sum_{y_0\in\F}\sum_{\y\in\F^m}
		e^{2\pi i \left(\frac{1}{2}y_0(y_i + Q(\x, \y)) + R(\x, \y)\right)}\ket{f(\x,\y)}
		&&\reduces \frac{1}{\sqrt{2^{m+1}}}\sum_{\substack{\y\in\F^m}}
			e^{2\pi i\left(R[y_i\gets \lift{Q}]\right)(\x, \y)}\ket{\left(f[y_i \gets Q]\right)(\x,\y)} 
			& \textsf{[HH]}
\end{alignat*}\end{minipage}}
\[
	\inference
		{P(\x,\y) = \frac{1}{4}y_ix + \frac{1}{2}y_i(y_j + Q(\x, \y)) + R(\x, \y) 
			= \frac{1}{4}y_j(1-x) + \frac{1}{2}y_j(y_i + Q'(\x, \y)) + R'(\x, \y)}
		{\frac{1}{\sqrt{2^{m+2}}}\sum_{\y\in\F^{m+2}}e^{2\pi i P(\x, \y)}\ket{f(\x,\y)} \reduces 
		 \frac{1}{\sqrt{2^{m}}}\sum_{\y\in\F^{m}}e^{2\pi i \left((1 - x)R[y_j\gets \lift{Q}] + 
		 	xR'[y_i\gets \lift{Q'}]\right)(\x, \y)}\ket{f(\x,\y)}}
	\tag*{\textsf{[Case]}}
\]
\caption{Path-sum reduction rules}\label{fig:rewrite}
\end{figure}

\begin{proposition}[Correctness]\label{thm:correctness}
	If $\xi \reducestrans \xi''$, then $\xi\equiv \xi'$.
\end{proposition}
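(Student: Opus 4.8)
The plan is to reduce the statement to soundness of each individual rewrite rule and then verify those by direct computation with roots of unity (the content of the Overview, made precise).

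\textbf{Reduction to single steps.} Since $\equiv$ is literally equality of the associated (partial) linear maps, it is an equivalence relation, so $\reducestrans$ (the transitive closure of $\reduces$) preserves $\equiv$ as soon as each single step $\xi\reduces\xi'$ does. A single step amounts to matching the phase polynomial $P$ and output signature $f$ of $\xi$ against one of the four patterns of \Cref{fig:rewrite} -- up to a renaming of path variables, which does not change $U_\xi$ -- and rewriting; because the ``remainder'' polynomials $R$ (and $R'$) and the output $f$ are left arbitrary in every pattern, there is no separate context-closure obligation. Finally, since $U_\xi$ is linear and the variables of the input signature range over $\F$, it suffices to check each rule as an identity of the formal exponential sums obtained after fixing an arbitrary Boolean input $\x\in\F^n$.

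\textbf{The \textsf{[Elim]}, \textsf{[$\omega$]}, \textsf{[HH]} rules.} For each, split $\sum_{y_0\in\F}$ into its $y_0=0$ and $y_0=1$ branches. For \textsf{[Elim]}, $y_0$ occurs in neither $P$ nor $f$, so the two branches coincide and the step is the arithmetic identity $\tfrac{2}{\sqrt{2^{m+2}}}=\tfrac{1}{\sqrt{2^m}}$. For \textsf{[$\omega$]}, the $y_0=1$ branch contributes a factor $e^{2\pi i/4}e^{2\pi i Q/2}=i(-1)^{Q(\x,\y)}$ -- here one uses that $Q$ is Boolean-valued, so this factor lies in $\{1,-1,i,-i\}$ rather than being an arbitrary phase -- and then $1+i(-1)^{q}=\sqrt{2}\,e^{2\pi i/8}e^{-2\pi i q/4}$ for $q\in\{0,1\}$; rewriting the exponent $-q/4$ as $-\tfrac14\lift{Q}$ is justified by \Cref{lem:poly}, which gives $\lift{Q}(\x,\y)=Q(\x,\y)$ since $Q$ is Boolean-valued, and the leftover $\tfrac{\sqrt2}{\sqrt{2^{m+1}}}=\tfrac{1}{\sqrt{2^m}}$ fixes the normalization. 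For \textsf{[HH]}, summing the $y_0$ branches yields $1+(-1)^{y_i+Q(\x,\y)}$, equal to $2$ when $y_i\equiv Q(\x,\y)\pmod 2$ and $0$ otherwise; since $y_i$ does not occur in $Q$, exactly one value of $y_i$ survives for each assignment of the remaining variables, so eliminating $y_0$ and collapsing on this value is the substitution $y_i\gets Q$, and \Cref{lem:poly} ($\lift{Q}$ agrees with $Q\bmod 2$ on Boolean inputs) is what makes this substitution simultaneously correct in the $\F$-valued output $f[y_i\gets Q]$ and the $\D$-valued phase $R[y_i\gets\lift{Q}]$; the factor $2$ accounts for the $\tfrac{1}{\sqrt{2^{m+1}}}$ on the right, the surviving $y_i$ being a dummy later removable by \textsf{[Elim]}.

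\textbf{The \textsf{[Case]} rule.} I would derive this one from \textsf{[HH]}. Case on the Boolean value of $x$: when $x=0$, the hypothesis $P=\tfrac14 y_ix+\tfrac12 y_i(y_j+Q)+R$ becomes exactly an \textsf{[HH]}-pattern in $y_0:=y_i$ (with $y_j$ the substituted variable), and when $x=1$ the second given expression $P=\tfrac14 y_j(1-x)+\tfrac12 y_j(y_i+Q')+R'$ becomes an \textsf{[HH]}-pattern in $y_0:=y_j$; applying the already-established \textsf{[HH]} identity in each case produces phases $R[y_j\gets\lift{Q}]$ and $R'[y_i\gets\lift{Q'}]$ respectively (the output is untouched since $y_i,y_j$ are internal). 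The polynomial $(1-x)R[y_j\gets\lift{Q}]+xR'[y_i\gets\lift{Q'}]$ on the right interpolates these two cases, and the same factor $2$ matches $\tfrac{1}{\sqrt{2^{m+2}}}\to\tfrac{1}{\sqrt{2^m}}$. I expect the one genuinely fiddly point of the whole proof to be the bookkeeping around substitution: one must check that the two displayed forms of $P$ are consistent as multilinear polynomials (they are the two halves of a single case split, not an extra assumption to discharge) and that substituting a lifted linear form into a multilinear phase and then reducing modulo $\langle y^2-y\rangle$ commutes with evaluation at Boolean points -- which is exactly what \Cref{lem:poly} together with multilinearity of $\lift{Q}$ provides. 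Conceptually nothing deeper than the $HH=I$ interference computation of the Overview is involved.
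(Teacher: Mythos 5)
Your proposal is correct and follows essentially the same route as the paper's proof in \Cref{app:proof}: each rule is verified by splitting the sum over the internal variable $y_0$, computing the resulting interference factor directly (using that $Q$ is Boolean-valued and invoking \Cref{lem:poly} to replace $Q$ by $\lift{Q}$ in the phase), and deriving \textsf{[Case]} by casing on $x$ and applying \textsf{[HH]} and \textsf{[Elim]}. The only cosmetic difference is that you write the \textsf{[$\omega$]} phase correction as $-\tfrac{1}{4}\lift{Q}$ (as in \Cref{fig:rewrite}) while the appendix derives $+\tfrac{3}{4}\lift{Q}$; these agree modulo $\Z$.
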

The correctness of our rewrite system follows from direct calculation over symbolic exponential sums. As the proof is quite tedious, we leave it to \Cref{app:proof}. 

It is a trivial fact that our calculus is terminating, as every rule reduces the number of path variables. Moreover, each rewrite rule can be matched against in polynomial time, hence every path-sum reduces to a normal form in polynomial time.

\begin{proposition}[Strong normalization]\label{thm:normal}
	Every sequence of rewrites terminates with an irreducible path-sum. The sequence is linear in the number of path variables $m$ and for an $n$-qubit path-sum takes time polynomial in $n$ and $m$.
\end{proposition}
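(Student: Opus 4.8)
The plan is to supply a strictly decreasing termination measure and then bound the cost of a single step. For termination, let $\mu(\xi)$ be the number of path variables of $\xi$. Inspecting \Cref{fig:rewrite}, every rule strictly decreases $\mu$: \textsf{[Elim]}, \textsf{[$\omega$]} and \textsf{[HH]} each eliminate the internal variable $y_0$ (and \textsf{[HH]} additionally performs the substitution $y_i\gets Q$, which makes $y_i$ vacuous and hence a candidate for a subsequent \textsf{[Elim]}), while \textsf{[Case]} eliminates both $y_i$ and $y_j$. Thus $\mu(\xi')<\mu(\xi)$ whenever $\xi\reduces\xi'$, and since $\mu$ takes values in $\N$ it drops by at least one at each step. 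Consequently there are no infinite reduction sequences, so every maximal sequence of rewrites from $\xi$ terminates at a path-sum to which no rule of \Cref{fig:rewrite} applies --- an irreducible path-sum --- and, starting from $m$ path variables, any such sequence has length at most $m$, i.e.\ linear in $m$. (We neither need nor claim confluence; as already noted, the resulting normal form need not be unique.)

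For the per-step cost there are two subtasks: detecting that some rule applies, together with witnesses for its metavariables, and then carrying out the rewrite. Detection enumerates candidates for the eliminated internal variable(s) --- $O(m)$ choices of $y_0$ for \textsf{[Elim]}, \textsf{[$\omega$]}, \textsf{[HH]}, and $O(m^2)$ ordered pairs $(y_i,y_j)$ for \textsf{[Case]} --- and, for each candidate, tests whether the phase polynomial $P$ has the shape prescribed by the rule: for \textsf{[Elim]}, that $y_0$ occurs neither in $P$ nor in the output signature; for \textsf{[$\omega$]}, that $P=\frac14 y_0+\frac12 y_0 Q+R$; for \textsf{[HH]}, that $P=\frac12 y_0(y_i+Q)+R$ with $y_i$ internal and $y_i\notin Q$; and for \textsf{[Case]}, the simultaneous two-sided decomposition of $P$ with respect to $y_i$ and $y_j$. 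Each such test collects the monomials of $P$ containing the chosen variable, divides that variable out, matches the resulting quotient against the prescribed pattern (affine in $y_i$, etc.), and verifies the side conditions --- internality of the eliminated variables, Boolean-valuedness of $Q$, and $y_i\notin Q$ --- all polynomial in the number of monomials of the phase and output polynomials. Executing the rewrite is likewise polynomial: a constant number of polynomial additions together with substitutions $y_i\gets Q$ (resp.\ $y_i\gets\lift{Q}$) into $P$, $R$ and the output functions.

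Combining the two halves, a reduction of an $n$-qubit path-sum with $m$ path variables comprises at most $m$ steps, each of cost polynomial in $n$, $m$ and the number of monomials of the phase and output polynomials; and for path-sums arising from Clifford+$R_k$ circuits this last quantity is itself polynomial in $n$ and $m$, since by \Cref{cor:size} such path-sums start with polynomially many monomials and by \Cref{lem:order} the substitutions the rules perform do not increase the order --- hence not the degree --- of the phase, so the monomial count stays $O((n+m)^k)$ throughout. Therefore the whole normalization runs in polynomial time. I expect the main obstacle to reside entirely in this per-step analysis rather than in termination: one must argue carefully that each rule's left-hand side can be pattern-matched against an arbitrary phase polynomial in polynomial time --- in particular checking the Boolean-valuedness and internality side conditions and, for \textsf{[Case]}, recognising the same polynomial written in the two prescribed forms --- and one must control the growth of the phase and output polynomials under iterated substitution, which is precisely where the order bound of \Cref{lem:order} and the degree bound for Clifford+$R_k$ path-sums do the work.
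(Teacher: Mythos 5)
Your argument is correct and takes essentially the same route as the paper: termination and the linear length bound follow because every rule strictly eliminates at least one path variable, and each step is a polynomial-time pattern match plus substitution. The paper asserts this in a single sentence without proof, so your elaboration of the matching cost and of where \Cref{lem:order} and \Cref{cor:size} control the growth of the phase polynomial is, if anything, more careful than the original.
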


\subsection{Examples}

To illustrate our rewrite system, we give examples below. Further examples can be found in \Cref{app:examples}.

\begin{example}
Recall that the standard implementation of the Toffoli gate over Clifford+$T$ has the path-sum form \cite{amm14} \vspace{-1pt}
\[
	\textsf{Toffoli}_3:\ket{x_1x_2x_3}\mapsto \frac{1}{\sqrt{2^2}}\sum_{y_1, y_2\in\F} 
		e^{2\pi i \frac{1}{2}\left(x_3y_1 + x_1x_2y_1 + y_1y_2\right)}\ket{x_1x_2y_2}.
\]
We can verify that this is equivalent to the functional specification $\ket{x_1x_2x_3}\mapsto \ket{x_1x_2(x_3\oplus x_1x_2)}$ with the following sequence of reductions and algebraic manipulations:
\begin{align*}
	\ket{x_1x_2x_3} 
		&\mapsto \frac{1}{\sqrt{2^2}} \sum_{y_1, y_2\in\F} 
			e^{2\pi i \frac{1}{2}\left(x_3y_1 + x_1x_2y_1 + y_1y_2\right)}\ket{x_1x_2y_2}\quad \\
		&\mapsto \frac{1}{\sqrt{2^2}} \sum_{y_1, y_2\in\F} 
			e^{2\pi i \frac{1}{2}y_1(y_2 + x_3 + x_1x_2)}\ket{x_1x_2y_2} \\
		&\mapsto \frac{1}{\sqrt{2^2}} \sum_{y_2\in\F}\ket{x_1x_2(x_3\oplus x_1x_2)} \tag*{\textsf{[HH]}} \\
		&\mapsto \ket{x_1x_2(x_3\oplus x_1x_2)} \tag*{\textsf{[Elim]}}
\end{align*}
\end{example}

\begin{example}
The controlled-$T$ gate can be specified as the path-sum
\[
  \textsf{controlled-T}:\ket{x_1x_2} \mapsto e^{2\pi i \frac{x_1x_2}{8}}\ket{x_1x_2}.
\]
An implementation of the controlled-$T$ gate over Clifford+$T$ is given below:
\[
\centerline{\footnotesize
\Qcircuit @C=1em @R=.7em {
 & \ctrl{1} & \gate{S^\dagger} & \targ & \gate{T} & \targ & \qw & \gate{T} & \ctrl{2} & \gate{H} & \gate{T} & \gate{H}
                  & \ctrl{2} & \gate{T^\dagger} & \qw & \targ & \gate{T^\dagger} & \targ & \gate{S} & \ctrl{1} & \qw  \\
 & \targ & \ctrl{1} & \qw & \qw & \ctrl{-1} & \ctrl{1} & \qw & \qw & \qw & \qw & \qw 
             & \qw & \qw & \ctrl{1} & \ctrl{-1} & \qw & \qw & \ctrl{1} & \targ &  \qw \\
 \lstick{\ket{0}} & \gate{H} & \targ & \ctrl{-2} & \gate{T^\dagger} & \qw & \targ & \gate{T^\dagger} & \targ & \qw & \qw & \qw
                            & \targ & \gate{T} & \targ & \gate{T} & \qw & \ctrl{-2} & \targ & \gate{H} & \qw
}
}
\]
Computing the canonical path-sum and reducing we get
\begin{align*}
	\ket{x_1x_2}\ket{0} 
		&\mapsto \frac{1}{\sqrt{2^4}}\sum_{\y\in\F^4}
			e^{2\pi i \frac{1}{8}\left( 4x_1x_2y_1 + 4x_1y_2 + 4y_1y_2 + y_2 + 4y_2y_3 + 4x_1x_2y_3 + 4x_1y_4 + 4y_3y_4 + 4x_1x_2\right)}
			\ket{x_1x_2y_4} \\
		&\mapsto \frac{1}{\sqrt{2^4}}\sum_{\y\in\F^4}
			e^{2\pi i \left(\frac{1}{2}y_1(y_2 + x_1x_2) + 
				\frac{1}{8}(4x_1y_2 + y_2 + 4y_2y_3 + 4x_1x_2y_3 + 4x_1y_4 + 4y_3y_4 + 4x_1x_2)\right)}
			\ket{x_1x_2y_4} \;\; \\
		&\mapsto \frac{1}{\sqrt{2^2}}\sum_{y_3, y_4\in\F}
			e^{2\pi i \frac{1}{8}(4x_1x_2 + x_1x_2 + 4x_1x_2y_3 + 4x_1x_2y_3 + 4x_1y_4 + 4y_3y_4 + 4x_1x_2)}
			\ket{x_1x_2y_4} \tag*{\textsf{[HH, Elim]}} \\
		 &\mapsto \frac{1}{\sqrt{2^2}}\sum_{y_3, y_4\in\F}
			e^{2\pi i \left(\frac{1}{2}y_3y_4 + \frac{1}{8}(x_1y_4 + x_1x_2)\right)}
			\ket{x_1x_2y_4} \\
		&\mapsto e^{2\pi i \frac{x_1x_2}{8}}\ket{x_1x_2}\ket{0} \tag*{\textsf{[HH, Elim]}}
\end{align*}
Hence the above circuit implements the controlled-$T$ gate, and provably leaves the ancilla clean.
\end{example}

\section{Completeness}\label{sec:completeness}

While our calculus computes a normal form in polynomial time, the normal forms are not necessarily unique\footnote{It was pointed out by an anonymous referee that uniqueness would imply that equivalence checking of reversible Boolean circuits is in P. As this problem is co-NP-complete, uniqueness of our normal forms would indeed imply $\text{P}=\text{co-NP}$.} and hence our reduction system is incomplete. For instance, the Clifford+$T$ identity
\[
\centerline{\footnotesize
\Qcircuit @C=1em @R=.7em {
 & \ctrl{1} & \gate{X} & \qw & \qw & \qw & \qw & \ctrl{1} & \gate{X} & \qw & \qw & \qw & \qw & \ustick{\;\;\;\;2}\qw  \\
 & \targ & \gate{T} & \gate{H} & \gate{T} & \gate{H} & \gate{T^\dagger} & \targ & \gate{T} & \gate{H} & \gate{T^\dagger}
 	& \gate{H} & \gate{T^\dagger} & \qw
}
}
\]
from \cite{bs16} gives the irreducible path-sum $\ket{x_1x_2} \mapsto \frac{1}{\sqrt{2^8}}\sum_{\y\in\F^8}e^{2\pi i \frac{1}{8}P(\x, \y)}\ket{x_1y_8}$ with phase polynomial
\begin{align*}
P(\x, \y) =
	2 &+ 6x_1x_2 + x_2 + y_1 + 4y_1(x_1 + x_2 + y_2) + 6y_2 + 4y_2y_3 + 2y_2x_1 + 3y_3 + 4y_3(x_1 + y_4) \\
	&+ 4y_4y_5 + 6y_4x_1 + y_5 + 4y_5(x_1 + y_6) + 6y_6 + 4y_6y_7 + 2y_6x_1 + 3y_7 + 4y_7(x_1 + y_8) + 7y_8.
\end{align*}
A complete verification procedure could proceed by explicitly expanding the values of remaining variables in the path-sum after all possible reductions have been made, and then checking equivalence to the identity transformation. In practice we found that this is generally not necessary, as our calculus, along with some additional observations, is sufficient to prove equivalence or non-equivalence for the majority of circuits. Moreover, these heuristics combined with path-sum reductions give a complete, polynomial-time procedure for determining equivalence of Clifford group circuits.

\subsection{Isometry restrictions}
Our first heuristic reduces the number of path variables in a \emph{well-formed} path sum when checking equivalence. Specifically, we denote by $\xi|_{f(\x,\y)=\x}$ the restriction of $\xi$ to solutions $\x\in\F^n, \y\in\F^m$ such that $f(\x,\y)=\x$, which we can write as the restricted sum
\[
	\ket{\x} \mapsto \frac{1}{\sqrt{2^m}}\sum_{\y\in\F^m, f(\x, \y) = \x}e^{2\pi i P(\x, \y)}\ket{\x}.
\]
Effectively, the sum $\frac{1}{\sqrt{2^m}}\sum_{\y\in\F^m, f(\x, \y) = \x}e^{2\pi i P(\x, \y)}$ gives the amplitude of the basis state $\ket{\x}$ in the output for a given input state $\ket{\x}$. If the path sum $\xi$ is well-formed (i.e. isometric), then this sum will be equal to $1$ exactly if $\xi$ is the identity transformation. We sum this up in the lemma below:
\begin{lemma}\label{lem:wf}
	Suppose $U_{\xi}:\ket{\x}\mapsto\frac{1}{\sqrt{2^m}}\sum_{\y\in\F^m}e^{2\pi i P(\x, \y)}\ket{f(\x, \y)}$
	is a well-formed path-sum. Then $\xi\equiv \ket{\x}\mapsto \ket{\x}$ if and only if 
	$\xi|_{f(\x, \y) = \x}\equiv \ket{\x}\mapsto \ket{\x}.$
\end{lemma}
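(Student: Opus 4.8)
The plan is to prove the two directions of the biconditional separately, with the forward direction being essentially trivial and the reverse direction being the substantive one, resting on the well-formedness (isometry) hypothesis. For the forward direction, suppose $\xi \equiv \ket{\x}\mapsto\ket{\x}$. Then $U_\xi$ is literally the identity, so for every input $\ket{\x}$ the output is exactly $\ket{\x}$; restricting the exponential sum to those $\y$ with $f(\x,\y)=\x$ discards only terms whose contribution to the amplitude of $\ket{\x}$ was already absent, and what remains still computes the (now trivially $\ket{\x}$-valued) output. So $\xi|_{f(\x,\y)=\x} \equiv \ket{\x}\mapsto\ket{\x}$ as well. I would phrase this as: the amplitude of $\ket{\x}$ in $U_\xi\ket{\x}$ equals $\frac{1}{\sqrt{2^m}}\sum_{\y:\,f(\x,\y)=\x} e^{2\pi i P(\x,\y)}$, which for the identity is $1$; and every other output basis state has amplitude $0$.

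For the reverse direction, assume $\xi|_{f(\x,\y)=\x}\equiv\ket{\x}\mapsto\ket{\x}$, i.e. for every $\x\in\F^n$,
\[
	\frac{1}{\sqrt{2^m}}\sum_{\y\in\F^m,\ f(\x,\y)=\x} e^{2\pi i P(\x,\y)} = 1.
\]
I want to conclude $U_\xi\ket{\x}=\ket{\x}$ for all $\x$. The key observation is that the left-hand side above is exactly $\braket{\x}{U_\xi\x}$, the diagonal entry of $U_\xi$ on input $\ket{\x}$. So the hypothesis says all these diagonal entries equal $1$. Now invoke well-formedness: $U_\xi$ is a partial isometry, so $\|U_\xi\ket{\x}\| \le 1$ (with equality when $\ket{\x}$ is in the domain of definition — and since $\braket{\x}{U_\xi\x}=1\neq 0$, it certainly is). Write $U_\xi\ket{\x} = \ket{\x} + \ket{\psi}$ where $\ket{\psi}\perp\ket{\x}$ (the orthogonal decomposition). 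Then $\|U_\xi\ket{\x}\|^2 = 1 + \|\ket{\psi}\|^2$, but isometry forces $\|U_\xi\ket{\x}\|^2 = \|\ket{\x}\|^2 = 1$, hence $\ket{\psi}=0$ and $U_\xi\ket{\x}=\ket{\x}$. Since this holds for every computational basis input $\ket{\x}$, $U_\xi$ is the identity on its domain, which is exactly the statement $\xi\equiv\ket{\x}\mapsto\ket{\x}$.

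The main obstacle — really the only place care is needed — is handling the partial-isometry case correctly: the domain of $U_\xi$ may be a proper subspace spanned by some subset of the basis states, and one must argue that the hypothesis $\braket{\x}{U_\xi\x}=1$ both places $\ket{\x}$ in that subspace and pins down the image. The norm argument above does exactly this, using only that a partial isometry preserves norms on its initial subspace and annihilates the orthogonal complement; a state with nonzero diagonal entry cannot lie in the kernel. I would also remark that without well-formedness the lemma fails — e.g. $\ket{x}\mapsto\ket{0}$ has the right restricted diagonal behavior on input $\ket{0}$ but is not the identity — which is why the isometry hypothesis is essential and cannot be dropped.
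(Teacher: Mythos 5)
Your proof is correct and follows essentially the same route as the paper, which justifies the lemma only informally: the restricted sum is the diagonal amplitude $\braket{\x}{U_\xi \x}$, and well-formedness (the isometry condition) forces the residual component $\ket{\psi}$ in $U_\xi\ket{\x}=\ket{\x}+\ket{\psi}$ to vanish when that amplitude is $1$. Your norm computation simply makes explicit the argument the paper sketches in the remark following the lemma.
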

Note that \cref{lem:wf} doesn't hold if $\xi$ is not well-formed, as $U_{\xi}$ may not be an isometry and so it may be that $U_{\xi}\ket{\x} = \ket{\x} + \ket{\psi}$ for some residual state $\ket{\psi}$. To reify the restricted path-sum $\xi|_{f(\x, \y)=\x}$ we find path variable substitutions which give $f_i(\x, \y)=x_i$ -- in particular, if for some index $i$ we have $f_i(\x, \y) = y_i\oplus Q(\x, \y)$ where $y_i$ doesn't appear in $Q(\x,\y)$, we can substitute $Q(\x, \y)$ for $y_i$ to get $f_i(\x, \y) = x_i$ and remove $y_i$ from the sum. Any restrictions which can't be reified are simply ignored. In practice this results in a significant simplification for some circuits, instantly removing up to $n$ path variables.

\subsection{Non-equivalence}
As the reduction rules of \cref{fig:rewrite} only suffice to prove \emph{positive} results, when no more reductions are possible we apply an observation that was found to be effective for proving that a path sum $\xi$ is \emph{not} the identity. In particular, recall that
\[
	\frac{1}{\sqrt{2^{m+1}}}\sum_{y_0\in\F}
	\sum_{\y\in\F^m}e^{2\pi i \left( \frac{1}{2}y_0Q(\x, \y) + R(\x,\y)\right)}\ket{f(\x, \y)} = 0 
\]
if $Q(\x, \y) = 1\mod 2$. If $Q$ is a non-zero Boolean-valued polynomial in \emph{only} input variables $x_i$, then there necessarily exists a solution $\x\in\Z^n$ such that $Q(\x) = 1\mod 2$ \cite{od14}, and in particular
\[
	\sum_{\y\in\F^m}e^{2\pi i \left( \frac{1}{2}y_0Q(\x, \y) + R(\x,\y)\right)}\ket{f(\x, \y)} = \frac{1}{\sqrt{2^{m+1}}}
	\sum_{\y\in\F^m}(1 - 1)e^{2\pi i R(\x,\y)}\ket{f(\x, \y)} = 0 .
\]
We sum this up in the following lemma.
\begin{lemma}\label{lem:negative}
	Suppose 
	$
		U_{\xi}: \ket{\x} \mapsto \frac{1}{\sqrt{2^{m+1}}}\sum_{y_0\in\F}
		\sum_{\y\in\F^m}e^{2\pi i \left( \frac{1}{2}y_0Q(\x, \y) + R(\x,\y)\right)}\ket{f(\x, \y)}
	$
	where $Q$ is a non-zero integer-valued polynomial not containing any path variables. Then $\xi \centernot{\equiv} \ket{\x}\mapsto\ket{\x}$.
\end{lemma}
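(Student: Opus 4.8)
The plan is to refute identity-equivalence by exhibiting a single computational basis input on which $U_{\xi}$ does not act as the identity; in fact I will show it is sent to $\0$. The starting point is the hypothesis on $Q$. Since the multilinear representation of a pseudo-Boolean function is unique, a non-zero $Q$ in the input variables alone that is integer-valued on $\F^n$ cannot take only even values, so there is a point $\x_0\in\F^n$ with $Q(\x_0)$ an odd integer -- this is precisely the observation recorded (and attributed to \cite{od14}) in the paragraph preceding the statement, and for multilinear $Q$ it suffices to range over $\{0,1\}^n$.

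Next I would substitute $\x=\x_0$ into the exponential sum. Because $Q$ contains no path variables, $Q(\x_0,\y)=Q(\x_0)$ is a constant independent of $\y$, and neither $R$ nor $f$ involves $y_0$; hence the double sum factors and
\[
	U_{\xi}\ket{\x_0}
		= \frac{1}{\sqrt{2^{m+1}}}\left(\sum_{y_0\in\F}e^{\pi i\, y_0 Q(\x_0)}\right)
			\sum_{\y\in\F^m}e^{2\pi i R(\x_0,\y)}\ket{f(\x_0,\y)}.
\]
The scalar prefactor equals $1 + e^{\pi i Q(\x_0)} = 1 + (-1)^{Q(\x_0)} = 0$ since $Q(\x_0)$ is odd, so $U_{\xi}\ket{\x_0}=\0\neq\ket{\x_0}$. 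Therefore $U_{\xi}$ is not the map $\ket{\x}\mapsto\ket{\x}$, i.e. $\xi\centernot{\equiv}\ket{\x}\mapsto\ket{\x}$. Note that, unlike \Cref{lem:wf}, this argument does not require $\xi$ to be well-formed: we directly produce an input annihilated by $U_{\xi}$.

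The only step that is more than a one-line computation is the first: guaranteeing an $\x_0$ with $Q(\x_0)$ odd. This rests on reading ``non-zero'' correctly -- it should mean that $\x\mapsto Q(\x)\bmod 2$ is not the zero function, equivalently that reducing $Q$ modulo $2$ and the multilinear relations leaves a non-zero $\F$-polynomial -- and then appealing to the standard bijection between multilinear polynomials and pseudo-Boolean functions to find a point where the value is $1$. Everything downstream is routine, using only $e^{\pi i}=-1$ and the fact that $y_0$ decouples from the remaining variables once $Q$ loses its path-variable dependence.
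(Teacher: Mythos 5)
Your proposal is correct and follows essentially the same route as the paper, whose justification appears in the paragraph immediately preceding the lemma: find an input $\x_0$ with $Q(\x_0)$ odd (guaranteed for a non-zero Boolean polynomial in the input variables), observe that the $y_0$ sum contributes the factor $1+(-1)^{Q(\x_0)}=0$, and conclude that $U_{\xi}$ annihilates $\ket{\x_0}$ and hence cannot be the identity. Your added care about interpreting ``non-zero'' modulo $2$ and the remark that well-formedness is not needed are welcome clarifications but do not change the argument.
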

Hence we can use a variant of \textsf{[HH]} where $Q$ contains only input variables to prove non-equivalence of a path-sum to the identity.

\subsection{Clifford completeness}

We can now show that together with the above simplifications, our path-sum reductions are complete for proving equivalence of Clifford group circuits. Recall that over the Clifford group, the path-sum interpretation of a circuit has phase polynomial of order at most $2$. Our proof of completeness rests on the fact that progress can always be made for an identity path-sum with only internal path variables and second-order phase polynomial, as shown below.

\begin{lemma}[Clifford progress \& preservation]\label{lem:pp}
	If $\xi$ is a path-sum such that $\xi\equiv\ket{\x}\mapsto\ket{\x}$, $\ord{P} \leq 2$ and $\xi$ contains only internal path variables, 
		then there exists $\xi'$ such that $\xi\reduces \xi'$ and $\ord{P'}\leq 2$.
\end{lemma}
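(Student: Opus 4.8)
The plan is to show that an irreducible, identity-equivalent path-sum with only internal path variables and $\ord{P}\le 2$ cannot exist unless it already has no path variables (in which case, after an \textsf{[Elim]}, we are done). The key is a normal-form analysis of second-order phase polynomials. Since $\ord{P}\le 2$, every term of $P$ has the form $\frac14 a$ (a constant, order $\le 2$ only if the denominator is at most $2^2$ with $|\alpha|=0$, so constants up to $\frac14$), $\frac12 b x_i y_j$-type linear-times-something terms, or $\frac12$ times a quadratic monomial; concretely, writing $P$ modulo $\Z$, each path variable $y_0$ appears only in terms $\frac14 c_0 y_0$, $\frac12 y_0 \ell_0$ where $\ell_0$ is a sum of other variables, and (for order reasons) never in a term with denominator $8$ or higher, nor in a cubic monomial. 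So for each internal $y_0$ the phase decomposes as $P = \frac14 c_0 y_0 + \frac12 y_0 Q_0(\x,\y) + R_0(\x,\y)$ with $Q_0$ a \emph{linear} Boolean-valued polynomial in the remaining variables and $y_0 \notin R_0$.

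First I would observe that this is almost exactly the left-hand side of one of the rewrite rules: if $c_0 \equiv 0 \pmod 2$ for some $y_0$, then $P = \frac12 y_0 Q_0 + R_0$ and we can apply \textsf{[HH]} provided $Q_0$ contains a path variable $y_i$ with coefficient $1$ (so that $y_i \gets Q_0 - y_i$ is a legal substitution) — and $Q_0$ must contain \emph{some} variable, since if $Q_0$ were a constant then either \textsf{[Elim]} applies (if $Q_0 \equiv 0$) or the whole sum vanishes (if $Q_0 \equiv 1$), contradicting $\xi\equiv\ket{\x}\mapsto\ket{\x}$; and every variable in $Q_0$ is a path variable because $\xi$ has only internal path variables, so no input variables occur at all. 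If instead $c_0$ is odd for some $y_0$, then $P = \frac14 y_0 + \frac12 y_0 Q_0 + R_0$ and \textsf{[$\omega$]} applies directly. In either branch we produce $\xi'$, and by \Cref{lem:order} the substitution of a linear Boolean polynomial does not increase the order, so $\ord{P'}\le 2$; the \textsf{[$\omega$]} rule's output $\frac18 - \frac14\lift{Q_0} + R_0$ likewise has order $\le 2$ because $\lift{Q_0}$ of a linear $Q_0$ has order $\le 1$ (its top term is $\frac{(-2)^{|S'|-1}}{1}$ times a degree-$|S'|$ monomial, of order $|S'|-1 \le$ the original, bounded by the analysis in \Cref{lem:order}) and multiplying by $\frac14$ raises the denominator exponent by $2$, keeping order $\le 1-1+2 = 2$; the constant $\frac18$ has order $3-1=2$. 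So preservation holds in every case.

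The remaining gap — and the main obstacle — is to rule out the situation where \emph{some} internal $y_0$ appears with even $c_0$ but $Q_0$ contains no path variable with coefficient exactly $1$: since $Q_0$ is a linear Boolean polynomial and all its variables are path variables, $Q_0$ is literally a $\mathbb{Z}_2$-sum $\sum_{i\in S} y_i$, so if $S\ne\emptyset$ then $y_i$ does occur with coefficient $1$ for each $i\in S$ and \textsf{[HH]} does apply. Thus the only way to be stuck is $S=\emptyset$ for the even case and the linear coefficient of $y_0$ being $\pm\frac14$ — but that is the \textsf{[$\omega$]} case. Hence I would argue by cases on a fixed internal $y_0$: its coefficient in $P$ modulo $\Z$ is $\frac{c_0}{4} + \frac12 Q_0$; reading off $c_0 \bmod 4$ and the linear form $Q_0$, every one of the four possibilities for $c_0 \bmod 4$ combined with $Q_0$ empty-or-not lands in the domain of \textsf{[Elim]}, \textsf{[HH]}, \textsf{[$\omega$]}, or forces the amplitude to $0$ (impossible). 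The one subtlety to check carefully is that order $\le 2$ genuinely forbids $y_0$ from occurring in a denominator-$8$ term or a cubic term (so that $Q_0$ really is linear and the coefficient of $y_0$ really is in $\{0,\frac14,\frac12,\frac34\} \bmod 1$ after collecting), which follows because such a term would contribute order $\ge 3$; I would state this as the opening reduction of the proof and then dispatch the finitely many cases mechanically.
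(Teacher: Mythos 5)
Your decomposition $P = \tfrac14 c_0 y_0 + \tfrac12 y_0 Q_0 + R_0$ with $Q_0$ linear, and your three-way case split on the coefficient of $y_0$ (landing in \textsf{[Elim]}, \textsf{[HH]}, or \textsf{[$\omega$]}) is exactly the paper's argument, and your order-preservation reasoning via \cref{lem:order} is correct. But there is a genuine gap in the \textsf{[HH]} branch. You dismiss the problematic subcase by asserting that ``every variable in $Q_0$ is a path variable because $\xi$ has only internal path variables, so no input variables occur at all.'' That is a misreading of the hypothesis: ``contains only internal path variables'' means every \emph{path} variable is internal (absent from the output signature $\ket{\x}$), not that the input variables $\x$ are absent. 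Input variables routinely appear in the phase polynomial of such sums -- e.g.\ the restricted sum for $HH$ has phase $\tfrac12 y_1(x+y_2)$ -- so $Q_0$ may well be a linear form in the $x_i$ alone, in which case \textsf{[HH]} cannot fire (it needs a path variable $y_i$ in the quotient to substitute away) and neither can \textsf{[Elim]} or \textsf{[$\omega$]}.

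This is precisely the case the paper dispatches with \cref{lem:negative}: if $Q_0$ is a non-zero Boolean-valued polynomial in the input variables only, there is an assignment $\x$ with $Q_0(\x)=1$, so the $y_0$-branches destructively interfere and $U_\xi\ket{\x}=0$, contradicting $\xi\equiv\ket{\x}\mapsto\ket{\x}$. So the conclusion you want in that subcase is true, but it is not a triviality about variable occurrence -- it requires the hypothesis $\xi\equiv\ket{\x}\mapsto\ket{\x}$ together with a (nontrivial) fact about satisfiability of non-zero multilinear polynomials over $\F$. Your proposal never invokes the identity-equivalence hypothesis in the \textsf{[HH]} branch at all, which is a sign something is missing: without it, progress genuinely fails (an irreducible non-identity Clifford path-sum with only internal path variables exists, e.g.\ one whose quotient is $x_1$). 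Patch the proof by adding this subcase and citing \cref{lem:negative}, and the rest goes through as you wrote it.
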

\begin{proof}
	Since $P$ is at most second-order, we can write $P = y_0Q + R$
	for some internal path variable $y_0$ and polynomials $Q, R$ where $Q$ is at most first-order, and in particular has the form
	\[
		a\frac{1}{4} + b\frac{1}{2}Q'
	\]
	where $a,b\in\F$ and $Q'$ is a linear Boolean-valued polynomial. 
	We have 3 cases to consider, corresponding to the \textsf{[Elim]}, \textsf{[HH]} and \textsf{[$\omega$]} rules respectively.
	\paragraph{Case 1: $a=b=0$.}
		The variable $y_i$ does not appear in $P$, hence $\xi\reduces_{\textsf{[Elim]}} \xi'$ and $\ord{P'}=\ord{P} \leq 2$.
	\paragraph{Case 2: $a=0, b=1$.}
		If the polynomial $Q'$ contains a path variable $y_i$, then $Q' = y_i + Q''$ and $\xi\reduces_{\textsf{[HH]}} \xi'$. 
		Further, by \cref{lem:order}, $\ord{R[y_i\gets \lift{Q''}]} \leq \ord{R} \leq 2$ and $\xi'$ has only internal paths since $y_i\notin f$.
		
		If on the other hand $Q'$ only contains input variables, by \cref{lem:negative} $\xi \centernot{\equiv} \ket{\x} \mapsto \ket{\x}$,
		  a contradiction.
	\paragraph{Case 3: $a=1$.}
		The sum matches the left hand side of \textsf{[$\omega$]}, hence $\xi\reduces_{\textsf{[$\omega$]}} \xi'$.
		Further, by \cref{lem:order} 
		\[
			\ord{P'}=\ord{\frac{1}{8} - \frac{1}{4}\lift{Q''} + R} = \max \left\{ \ord{\frac{1}{8}}, \ord{\frac{1}{4}\lift{Q''}}, \ord{R}\right\} = 2.
		\]
\end{proof}

\begin{corollary}
	If $C$ is a Clifford-group quantum circuit, then $\sop{C} \equiv \ket{\x}\mapsto \ket{\x}$ can be decided in time polynomial in the space-time volume of $C$.
\end{corollary}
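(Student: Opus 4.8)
The plan is to combine the compositional semantics, the Clifford size bound, the isometry restriction, and the progress/preservation lemma into a single polynomial-time algorithm. First I would reduce the equivalence question to an identity-check: given the Clifford circuit $C$ on $n$ qubits, form the miter path-sum $\xi = \sop{C^\dagger}\circ\sop{C}$ (or directly $\sop{C}$ composed against the identity path-sum), which by \Cref{thm:correctness} and the composition proposition satisfies $\sop{C}\equiv\ket{\x}\mapsto\ket{\x}$ if and only if $\xi\equiv\ket{\x}\mapsto\ket{\x}$. By \Cref{cor:size}, $\xi$ is computable in time polynomial in the space-time volume of $C$, it has $m = O(n\cdot|C|)$ path variables, and since $C$ is a Clifford circuit its phase polynomial $P$ has $\ord{P}\leq 2$.

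Next I would apply the isometry restriction of \Cref{lem:wf}: since $\xi$ is well-formed (it is the path-sum of a unitary circuit, hence a genuine isometry), it suffices to reify $\xi|_{f(\x,\y)=\x}$ by substituting out path variables that occur linearly in the output signature, which is a polynomial-time transformation and cannot increase $\ord{P}$ (by \Cref{lem:order}, each substitution is of a linear Boolean polynomial). After this the output signature is $\ket{\x}$ for every branch, so every remaining path variable is \emph{internal}. Now I would run the rewrite system of \Cref{fig:rewrite} to a normal form. By \Cref{thm:normal} this terminates after $O(m)$ rewrites, each taking polynomial time. The key invariant, maintained by \Cref{lem:pp}, is that throughout this process the path-sum is $\equiv\ket{\x}\mapsto\ket{\x}$, has only internal path variables, and has second-order phase polynomial: \Cref{lem:pp} guarantees that as long as path variables remain, some rule applies and the resulting $\xi'$ still has $\ord{P'}\leq 2$ (and the [Elim] and [HH] cases visibly keep all path variables internal, while [$\omega$] removes the only path variable it touches). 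Hence the rewriting can only stop when no path variables are left, at which point the path-sum is of the form $\ket{\x}\mapsto e^{2\pi i P(\x)}\ket{\x}$ with $\ord{P}\leq 2$; since it is still known to be $\equiv\ket{\x}\mapsto\ket{\x}$ by preservation, this is trivially verified. Conversely, if $C$ is \emph{not} the identity, then at some point no reduction preserving $\equiv\ket{\x}\mapsto\ket{\x}$ is possible while a path variable remains; tracing through \Cref{lem:pp}, the obstruction is exactly Case 2 with $Q'$ containing only input variables, and \Cref{lem:negative} then certifies non-equivalence. Either way the algorithm decides the question.

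The main obstacle I anticipate is the bookkeeping needed to make \Cref{lem:pp} drive the whole loop: one must argue that the two hypotheses of the lemma — only internal path variables, and $\ord{P}\leq 2$ — are genuinely preserved by \emph{every} applicable rule (not just that \emph{some} rule applies), and that the isometry restriction step in \Cref{lem:wf} does not destroy the second-order bound. The internality of path variables after [HH] needs the standard observation that the substituted variable $y_i$ was itself internal and is eliminated, and after the isometry restriction one must check that reified substitutions $y_i\gets Q(\x,\y)$ with $Q$ linear keep $\ord{P}\leq 2$, which is precisely \Cref{lem:order}. Once these invariants are nailed down, the polynomial-time bound is immediate from \Cref{cor:size} and \Cref{thm:normal}, and correctness is immediate from \Cref{thm:correctness}, \Cref{lem:wf}, \Cref{lem:negative}, and the progress direction of \Cref{lem:pp}.
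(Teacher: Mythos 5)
Your proposal is correct and follows essentially the same route as the paper's own proof: restrict via \Cref{lem:wf}, reify the restriction using linearity of the output signature (the paper does this by Gaussian elimination), note that linear substitutions preserve $\ord{P}\leq 2$ by \Cref{lem:order}, and then invoke \Cref{lem:pp}, \Cref{thm:normal}, and \Cref{lem:negative} to either reduce to the identity or certify non-equivalence. The only point worth tightening (which the paper also leaves implicit) is the terminal check when all path variables are eliminated but the circuit is \emph{not} the identity, e.g.\ $C=S$: one must still verify that the residual phase polynomial vanishes modulo $1$, which is immediate for a multilinear polynomial of order at most $2$.
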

\begin{proof}
	Since $\sop{C}$ is well-formed, by \cref{lem:wf} it suffices to check $\sop{C}|_{f(\x,\y) = \x}\equiv \ket{\x}\mapsto \ket{\x}$. Further, as $f(\x,\y)$ is linear, we can compute via Gaussian elimination a solution $\y$ so that $f(\x,\y) = \x$ for any $\x$ -- if no such solution exists, $\sop{C}\centernot{\equiv} \ket{\x} \mapsto \ket{\x}$. Since each $f_i$ is linear, $\ord{P[y_i\gets f_i]}\leq\ord{P}\leq 2$, hence by \cref{lem:pp} and \cref{thm:normal}, either $\sop{C}|_{f(\x,\y) = \x}$ reduces to $\ket{\x}\mapsto\ket{\x}$ in polynomial-time or $\xi'\centernot{\equiv}\ket{\x}\mapsto\ket{\x}$.
\end{proof}

%%%%%%%%%%%%%%%%%%%%%%%%%%%%%%%%%%%%%%%%
\section{Case studies}\label{sec:experiments}

We implemented our framework and verification algorithm in the open-source Haskell library \href{https://github.com/meamy/feynman}{\sc{Feynman}}. To test the efficacy of our methods, we performed verification of circuit optimizations (both correct and incorrect), as well as the verification of circuit implementations against formal path-sum specifications. All experiments were run in Debian Linux running on a quad-core 64-bit Intel Core i7 2.40 GHz processor and 8 GB RAM, and can be executed from the command line with \texttt{./feyn VerBench} and \texttt{./feyn VerAlg} for the translation validation and algorithm benchmarks, respectively.

\subsection{Translation validation}

\begin{table}
\footnotesize
\caption{Translation validation results. $n$ lists the number of qubits, Path vars gives the number of path variables, and Clifford and $T$ give the number of respective gates. Times for positive and negative verification measure the time to prove equivalence or non-equivalence against the optimized circuit or the optimized circuit with one random gate removed, respectively. Benchmarks with no timing results ran out of memory.}
\label{tab:results}
\centering
\begin{tabular}[t]{lrrrrrr} \toprule
Algorithm & $n$ & Path vars & Clifford & $T$ & \multicolumn{2}{c}{Time (s)} \\ \cmidrule(l{4pt}r{4pt}){6-7}
 & &  & & & Positive & Negative \\ \midrule
Grover\_5 & 9 & 200 & 1515 & 490 & 0.973 & 0.988 \\
Mod 5\_4 & 5 & 12 & 66 & 44 & 0.005 & 0.028  \\ 
VBE-Adder\_3 & 10 & 20 & 167 & 94 & 0.026 & 0.028 \\
CSLA-MUX\_3 & 15 & 40 & 289 & 132 & 0.099 & 0.055 \\
CSUM-MUX\_9 & 30 & 56 & 638 & 280 & 0.270 & 0.270 \\ 
QCLA-Com\_7 & 24 & 74 & 1237 & 297 & 0.530 & 0.543 \\
QCLA-Mod\_7 & 26 & 164 & 1641 & 650 & 9.446 & 10.517 \\
QCLA-Adder\_10 & 36 & 100 & 627 & 400 & 0.674 & 0.683  \\ 
Adder\_8 & 24 & 160 & 1419 & 614 & 1.968 & 2.018  \\
RC-Adder\_6 & 14 & 44 & 322 & 124 & 0.080 & 0.090  \\ 
Mod-Red\_21 & 11 & 60 & 392 & 192 & 0.110 & 0.119  \\
Mod-Mult\_55 & 9 & 28 & 180 & 84 & 0.028 & 0.009  \\ 
Mod-Adder\_1024 & 28 & 660 & 4363 & 3006 & 21.362 & 21.588  \\
Cycle 17\_3 & 35 & 1366 & 9172 & 6694 & -- & -- \\ 
GF($2^4$)-Mult & 12 & 28 & 263 & 180 & 0.063 & 0.061  \\
GF($2^5$)-Mult & 15 & 36 & 393 & 286 & 0.143 & 0.141  \\
GF($2^6$)-Mult & 18 & 44 & 559 & 402 & 0.279 & 0.291 \\
GF($2^7$)-Mult & 21 & 52 & 731 & 560 & 0.501 & 0.527 \\
GF($2^8$)-Mult & 24 & 60 & 975 & 712 & 0.837 & 0.881 \\
GF($2^9$)-Mult & 27 & 68 & 1179 & 918 & 1.304 & 1.369 \\
GF($2^{10}$)-Mult & 30 & 76 & 1475 & 1110 & 1.958 & 0.327 \\
GF($2^{16}$)-Mult & 48 & 124 & 3694 & 2832 & 16.028 & 17.539 \\ 
GF($2^{32}$)-Mult & 96 & 252 & 14259 & 11296 & 430.883 & 436.521 \\
GF($2^{64}$)-Mult & 192 & 508 & 55408 & 45120 & -- & -- \\
Hamming\_15 (low) & 17 & 76 & 612 & 158 & 0.367 & 0.168 \\
Hamming\_15 (med) & 17 & 184 & 1251 & 762 & 1.390 & 1.430 \\
Hamming\_15 (high) & 20 & 716 & 5332 & 3462 & 24.360 & 24.303 \\% \hline
HWB\_6 & 7 & 52 & 369 & 180 & 0.200 & 0.207  \\
HWB\_8 & 12 & 2282 & 17583 & 8895 & -- & -- \\
QFT\_4  & 5 & 84 & 218 & 136 & 0.084 & 0.089 \\ %\hline
$\Lambda_3(X)$ & 5 & 12 & 52 & 36 & 0.004 & 0.011 \\
$\Lambda_3(X)$ (Barenco) & 5 & 12 & 66 & 44 & 0.007 & 0.046  \\
$\Lambda_4(X)$ & 7 & 20 & 87 & 58 & 0.009 & 0.008 \\
$\Lambda_4(X)$ (Barenco) & 7 & 20 & 127 & 84 & 0.014 & 0.024 \\
$\Lambda_5(X)$ & 9 & 18 & 112 & 80 & 0.015 & 0.017 \\
$\Lambda_5(X)$ (Barenco) & 9 & 28 & 160 & 124 & 0.030 & 0.031  \\
$\Lambda_{10}(X)$ & 19 & 68 & 297 & 190 & 0.110 & 0.111 \\ 
$\Lambda_{10}(X)$ (Barenco) & 19 & 68 & 493 & 324 & 0.219 & 0.210 \\ \bottomrule
\end{tabular}
\end{table}

Translation validation is an important tool for verifying that the transformations a compiler performs do not change the semantics of an input program. While it is generally desirable to prove that a compiler operates correctly on \emph{all} input programs, as with \emph{verified compilers} like \textsf{CompCert} \cite{l06} or \textsc{ReVerC} \cite{ars17} in the reversible domain, in many cases this is infeasible since the best optimizations are typically difficult to formally verify.

We used our algorithm to verify a suite of optimized benchmark circuits against their original input. For the optimization algorithm we chose the \textsc{GraySynth} algorithm from \cite{aam17} which is implemented in \textsc{Feynman} and verified each benchmark reported in that paper. \Cref{tab:results} reports the results of our experiments. All but $3$ of the benchmark circuits were successfully verified, with the remaining $3$ benchmarks running out of memory with a 6 GB limit. The high memory usage may be mitigated in the future by switching to a linear-space representation of the phase polynomial. The largest (completed) benchmark GF($2^{32}$), containing $96$ bits, $252$ path variables and over $25000$ gates completed in under 10 minutes, with the remainder all taking under a minute. 

To test the algorithm's ability to prove \emph{non-equivalence}, we also performed the verification of the optimized benchmark circuits after removing a randomly selected gate. Again, all but $3$ benchmarks were proven to be not equivalent, with the negative verification results taking about the same amount of time as positive results.

\subsection{Verifying quantum algorithms}

\begin{table}
\footnotesize
\caption{Results of verifying formally specified quantum algorithms.}
\label{tab:specs}
\centering
\begin{tabular}[t]{lrrrrrr} \toprule
Algorithm & $n$ & Path vars & Clifford & $T$ & \multicolumn{2}{c}{Time (s)} \\ \cmidrule(l{4pt}r{4pt}){6-7}
 & & & & & Positive & Negative \\ \midrule
\textsf{Toffoli}$_{50}$ & 97 & 190 & 855 & 665 & 1.084 & 1.064 \\
\textsf{Toffoli}$_{100}$ & 197 & 390 & 1755 & 1365 & 5.566 & 5.275 \\
\textsf{Maslov}$_{50}$ & 74 & 192 & 481 & 384 & 0.801 & 0.778 \\
\textsf{Maslov}$_{100}$ & 149 & 392 & 981 & 784 & 3.987 & 3.983 \\
\textsf{Adder}$_8$ & 40 & 56 & 334 & 196 & 0.142 & 0.143 \\
\textsf{Adder}$_{16}$ & 80 & 120 & 710 & 420 & 25.527 & 92.607 \\
\textsf{QFT}$_{16}$ & 16 & 16 & 256 & -- & 1.250 & 1.335 \\
\textsf{QFT}$_{31}$ & 31 & 31 & 961 & -- & 16.929 & 15.295 \\
\textsf{Hidden Shift}$_{20, 4}$ & 20 & 60 & 5254 & 56 & 1.067 & 0.862 \\
\textsf{Hidden Shift}$_{40, 5}$ & 40 & 120 & 6466 & 70 & 3.383 & 2.826 \\
\textsf{Hidden Shift}$_{60, 10}$ & 60 & 180 & 12784 & 140 & 13.217 & 12.351  \\
\textsf{Symbolic Shift}$_{20, 4}$ & 40 & 60 & 5296 & 56 & 1.859 & 1.849 \\
\textsf{Symbolic Shift}$_{40, 5}$ & 80 & 120 & 6638 & 70 & 6.953 & 7.905 \\
\textsf{Symbolic Shift}$_{60, 10}$ & 120 & 180 & 12804 & 140 & 35.583 & 29.614 \\ \bottomrule
\end{tabular}
\end{table}

To evaluate our framework as a tool for functional specification and verification, we implemented and verified several quantum algorithms (both without and with errors) directly against their specification as a path sum. \Cref{tab:specs} reports the results of our experiments, and we describe the algorithms and implementations below.

\paragraph{Reversible functions}

We implemented and verified a number of known algorithms for reversible functions. In particular, we performed verifications of Clifford+$T$ implementations of the generalized Toffoli and (out-of-place) addition functions,
\begin{align*}
	\textsf{Toffoli}_n &: \ket{x_1x_2\dots x_n}\mapsto \ket{x_1x_2\dots (x_n\oplus x_1x_2\dots x_{n-1})}, \\
	\textsf{Adder}_n &: \ket{\x}\ket{\y}\ket{\0}\mapsto \ket{\x}\ket{\y}\ket{\x+\y}
\end{align*}
We chose two implementations of the $n$-bit Toffoli gate -- using the standard decomposition into $2(n-3) + 1$ Toffoli gates and $n-3$ ancillas, and the Maslov decomposition \cite{m16} using \emph{relative phase} Toffolis and $\lceil \frac{n-3}{2}\rceil$ ancillas. For either implementation we were able to verify up to $100$ bit Toffoli gates in just seconds.

For the addition circuit, we used a standard out-of-place ripple-carry adder which uses $n-1$ ancilla bits to store intermediate carry values and an additional $n$ bit register to store the output, before copying out and uncomputing. The resulting circuit uses $5n - 1$ bits of space for an $n$ bit adder, and $4(n-1)$ Toffoli gates, which are then expanded to the Clifford+$T$ gate set. The specification itself was generated by implementing binary addition on symbolic vectors, and could ostensibly be classically tested to verify its own correctness. In this case, the size of the bitwise expansion of $\x+\y$ made it difficult to push to implementation sizes (e.g., 32 bits), though smaller sizes such as $16$ bits were verifiable within a minute. \emph{Relational} techniques -- e.g., representing the outputs of a path-sum as ``primed'' variables along with equations relating them -- may help to push verification of such functions to larger sizes.

\paragraph{The quantum Fourier transform}

To test our verification method against circuits using higher-order rotations, we verified an implementation of the quantum Fourier transform. We use a circuit from \cite{klm07} together with a final qubit permutation correction and verified it against the specification
\[
	\textsf{QFT}_n: \ket{\x}\mapsto \frac{1}{\sqrt{2^n}}\sum_{\y\in\F^n}e^{2\pi i \frac{\int{\x\cdot \y}}{2^n}}\ket{\y}.
\]
The phase polynomial $\int{\x \cdot \y}$ was generated in the obvious way -- by computing $\int{\x} = x_1 + 2x_2 + \dots + 2^{n-1}x_n$ and multiplying the polynomials. In this case our implementation was able to verify implementations up to $31$ bits in size, after which integer overflow occurs due to our handling of dyadic arithmetic. Given that the 31 bit implementation took only 16 seconds to verify, it appears that with better methods for handling dyadic arithmetic much larger sizes of the QFT are likely verifiable.

\paragraph{The quantum hidden shift algorithm}

To test our framework on more general quantum algorithms, we implemented a version of the quantum hidden shift algorithm \cite{r10} which has been previously used to test quantum simulation algorithms \cite{bg16}. In particular, given oracles $O_{f'}:\ket{\x}\mapsto f(\x+\vec{s})\ket{\x}$ and $O_{\tilde{f}}:\ket{\x}\mapsto \tilde{f}(\x)\ket{\x}$ for the shifted and dual bent functions $f', \tilde{f}:\F^n\rightarrow \{-1, +1\}$ respectively, the circuit $H^{\otimes n}O_{\tilde{f}}H^{\otimes n}O_{f'}H^{\otimes n}$ is known \cite{r10} to implement the mapping $\ket{\0}\mapsto \ket{\vec{s}}$.

Following \cite{bg16}, we generated random instances of Maiorana McFarland bent functions by setting $f'(\x,\y) = f((\x,\y)+\vec{s})=(-1)^{g(\x) + \x\y}$ with dual $\tilde{f}(\x,\y)=(-1)^{g(\y) + \x\y}$ for a random $\frac{n}{2}$ bit Boolean function $g$ of degree $3$. The circuit for $f$ is generated by, for a given number of alternations $A$, alternating between selecting 200 random $Z$ and controlled-$Z$ gates, then a random doubly controlled-$Z$ gate, expanded out to Clifford+$T$. We implemented two versions of the algorithm, one where a concrete shift is given by a randomly generated Boolean vector, and another where the shift is supplied symbolically via a quantum register. In the former case we verify the circuit for a given shift $\vec{s}$ against the specification $\ket{\vec{0}}\mapsto\ket{\vec{s}}$, and in the latter case we verify the specification $\ket{\vec{0}}\ket{\vec{s}} \mapsto \ket{\vec{s}}\ket{\vec{s}}$. \Cref{fig:shift} shows both circuits.

\begin{figure}
\begin{subfigure}{0.48\textwidth}
\[
\centerline{
\footnotesize
\Qcircuit @C=.7em @R=.7em {
 & \multigate{1}{H} & \multigate{1}{X^{\vec{s}}} & \ctrl{1} & \gate{O_g} & \multigate{1}{X^{\vec{s}}}
 	& \multigate{1}{H} & \ctrl{1} & \qw & \multigate{1}{H} & \qw \\
 & \ghost{H} & \ghost{X^{\vec{s}}} & \control \qw & \qw & \ghost{X^{\vec{s}}} 
 	& \ghost{H} & \control \qw & \gate{O_g} & \ghost{H} & \qw
}
}
\vspace{3mm}
\]
\subcaption{Hidden shift with a fixed shift $\vec{s}$.}
\end{subfigure}
\begin{subfigure}{0.48\textwidth}
\[
\centerline{
\footnotesize
\Qcircuit @C=.7em @R=.7em {
 & \multigate{1}{H} & \multigate{1}{X} & \ctrl{1} & \gate{O_g} & \multigate{1}{X}
 	& \multigate{1}{H} & \ctrl{1} & \qw & \multigate{1}{H} & \qw \\
 & \ghost{H} & \ghost{X} & \control \qw & \qw & \ghost{X}
 	& \ghost{H} & \control \qw & \gate{O_g} & \ghost{H} & \qw \\
  \lstick{\ket{\vec{s}}} & \qw & \ctrl{-1} & \qw & \qw & \ctrl{-1} & \qw & \qw & \qw & \qw & \qw
}
}
\]
\subcaption{Hidden shift with a symbolic shift.}
\end{subfigure}
\caption{Circuits for the Quantum Hidden Shift algorithm.}
\label{fig:shift}
\end{figure}

Our verification algorithm actually found a bug in our first implementation, which was a direct implementation of the circuit given in \cite{bg16}. After reimplementing the circuit based on \cite{r10}, we were able to verify both versions of the hidden shift algorithm for sizes exceeding those simulated in \cite{bg16} with only a fraction of the time (seconds versus hours \cite{bg16}). Our calculus further finds the correct output $\ket{\vec{s}}$ or $\ket{\vec{s}}\ket{\vec{s}}$ even without providing the specification, effectively simulating the algorithm rather than verifying it. Moreover, our implementation is deterministic compared to theirs which is probabilistic and only samples the output distribution, rather than compute it outright. It is interesting to note that their algorithm also uses a similar technique of effectively evaluating the circuit's phase polynomial -- however, by including the $T$ gate phases directly in the polynomial and solving \emph{around} them, rather than pushing them into state preparations, we save a massive amount of time for this algorithm. An interesting question for future research is to determine whether there are quantum algorithms which can be simulated more efficiently by their methods.

%%%%%%%%%%%%%%%%%%%%%%%%%%%%%%%%%%%%%%%%
\section{Conclusion}

We have described a framework for the representation of partial isometries as sums over a discrete set of paths. As an alternative to matrices, our path-sums admit a symbolic representation using polynomials, for which there exists fixed-parameter polynomial size representations of Clifford+$R_k$ circuits. This allows the efficient computation and representation of the action of such a quantum circuit on an arbitrary basis state. Further, we have given a system of rewrite rules which can be used to reduce path-sums and perform functional verification. Our experiments have shown this to be a powerful framework for verifying large quantum circuits, particularly against formal mathematical specifications of quantum algorithms.

The work we have described here is only a preliminary step towards a fully-automated system of formal specification and verification for quantum circuits, and as such there are many issues for future work to address. One particularly appealing direction is to expand the path-sum framework to more general quantum programs, and to give a concrete syntax so that modular libraries of verified programs may be developed and used. Improvements can be made on the algorithmic side, from using Fourier expansions and relational methods to more efficiently store path-sums, to the use of algebraic decision diagrams or other mathematical tools to complete verification once no more reductions can be made. Another interesting direction, motivated by our experience writing path-sum proofs ``by hand,'' is to implement our framework in an interactive proof assistant, allowing inductive and higher-order proofs over entire families of quantum circuits.

%%%%%%%%%%%%%%%%%%%%%%%%%%%%%%
%% Acknowledgements %%%%%%%%%%%%%%
%%%%%%%%%%%%%%%%%%%%%%%%%%%%%%

\section{Acknowledgements}

The author wishes to thank Neil J. Ross and Michele Mosca for stimulating discussions on the topic of path sums, as well as the anonymous referees for their helpful comments on an earlier version. This work was supported in part by Canada's NSERC and CIFAR.

%%%%%%%%%%%%%%%%%%%%%%%%%%%%%%
%% Bibliography %%%%%%%%%%%%%%
%%%%%%%%%%%%%%%%%%%%%%%%%%%%%%

\bibliographystyle{eptcs}
\bibliography{paper}

%%%%%%%%%%%%%%%%%%%%%%%%%%%%%%
%% Appendices %%%%%%%%%%%%%%
%%%%%%%%%%%%%%%%%%%%%%%%%%%%%%

\appendix
\section{Correctness of rewrite rules}\label{app:proof}

In this appendix we prove correctness for the rewrite rules of \cref{fig:rewrite}.

\begin{proof}[Proof of \cref{thm:correctness}.]
We verify each rewrite rule by direct calculation. Recall that by \cref{lem:poly}, for any Boolean-valued polynomial $Q$, $\lift{Q}(\x, \y) = Q(\x, \y)\mod 2$. 

\begin{align*}
	\textsf{[Elim]:}\qquad \frac{1}{\sqrt{2^{m+2}}}\sum_{y_0\in\F}\sum_{\y\in\F^m}e^{2\pi iP(\x, \y)}\ket{f(\x, \y)}
		&=\frac{1}{\sqrt{2^{m+2}}}\sum_{\y\in\F^m}(1+1)e^{2\pi iP(\x, \y)}\ket{f(\x, \y)} \\
		&=\frac{1}{\sqrt{2^{m}}}\sum_{\y\in\F^m}e^{2\pi iP(\x, \y)}\ket{f(\x, \y)}
\end{align*}

\begin{align*}
	\textsf{[$\mathbf{\omega}$]:}\qquad \frac{1}{\sqrt{2^{m+1}}}\sum_{y_0\in\F}&\sum_{\y\in\F^m} 
		e^{2\pi i \left(\frac{1}{4}y_0 + \frac{1}{2}y_0 Q(\x, \y) + R(\x, \y)\right)}\ket{f(\x, \y)} \\
		&= \frac{1}{\sqrt{2^{m+1}}}\sum_{\y\in\F^m}
			\left(1 + e^{2\pi i\left(\frac{1}{4} + \frac{1}{2}Q(\x, \y)\right)}\right)e^{2\pi i R(\x, \y)}\ket{f(\x, \y)}  \\
		&=\begin{cases} \frac{1}{\sqrt{2^{m+1}}}\sum_{\y\in\F^m}
			\left(1 + i\right)e^{2\pi i R(\x, \y)}\ket{f(\x, \y)} & \text{ if } Q(\x, \y) = 0\mod 2 \\
			 \frac{1}{\sqrt{2^{m+1}}}\sum_{\y\in\F^m}
			\left(1 - i\right)e^{2\pi i R(\x, \y)}\ket{f(\x, \y)} & \text{ if } Q(\x, \y) = 1\mod 2
			\end{cases} \\
		&=\begin{cases} \frac{1}{\sqrt{2^{m}}}\sum_{\y\in\F^m}
			e^{2\pi i \left( \frac{1}{8} + R(\x, \y)\right)}\ket{f(\x, \y)} & \text{ if } Q(\x, \y) = 0\mod 2 \\
			 \frac{1}{\sqrt{2^{m}}}\sum_{\y\in\F^m}
			e^{2\pi i \left( \frac{1}{8} + \frac{3}{4} + R(\x, \y)\right)}\ket{f(\x, \y)} & \text{ if } Q(\x, \y) = 1\mod 2
			\end{cases} \\
		&= \frac{1}{\sqrt{2^{m}}}\sum_{\y\in\F^m}
			e^{2\pi i \left( \frac{1}{8} + \frac{3}{4}\lift{Q}(\x, \y) +  R(\x, \y)\right)}\ket{f(\x, \y)}
\end{align*}

\begin{align*}
	\textsf{[HH]:}\qquad \frac{1}{\sqrt{2^{m+1}}}\sum_{y_0\in\F}&\sum_{\y\in\F^m}
		e^{2\pi i \left(\frac{1}{2}y_0(y_i + Q(\x, \y)) + R(\x, \y)\right)}\ket{f(\x, \y)} \\
		&= \frac{1}{\sqrt{2^{m+1}}}\sum_{\y\in\F^m}
			\left(1 + e^{2\pi i (y_i + Q(\x, \y))}\right)e^{2\pi i R(\x, \y)}\ket{f(\x, \y)} \\
		&= \frac{1}{\sqrt{2^{m+1}}}\sum_{\y\in\F^m,\; y_i=Q(\x, \y) \mod 2}
			\left(1 + e^{2\pi i (2k)}\right)e^{2\pi i R(\x, \y)}\ket{f(\x, \y)} \\
			& + \frac{1}{\sqrt{2^{m+1}}}\sum_{\y\in\F^m, \; y_i = 1 + Q(\x, \y) \mod 2}
			\left(1 + e^{2\pi i (2k + 1)}\right)e^{2\pi i R(\x, \y)}\ket{f(\x, \y)} \\
		&= 	\frac{2}{\sqrt{2^{m+1}}}\sum_{\y\in\F^m,\; y_i = Q(\x,\y) \mod 2}
			e^{2\pi i R(\x, \y)}\ket{f(\x, \y)} \\
		&= \frac{1}{\sqrt{2^{m+1}}}\sum_{\y\in\F^m}e^{2\pi i \left(R[y_i\gets \lift{Q}]\right)(\x, \y)}\ket{\left(f[y_i\gets Q]\right)(\x, \y)}
\end{align*}
%The last equality above is effectively the reverse of \textsf{Elim}.

\textsf{[Case]:}
Recall the precondition \[
P(\x,\y) = \frac{1}{4}y_ix + \frac{1}{2}y_i(y_j + Q(\x, \y)) + R(\x, \y) 
			= \frac{1}{4}y_j(1-x) + \frac{1}{2}y_j(y_i + Q'(\x, \y)) + R'(\x, \y).\]
\begin{align*}
	\frac{1}{\sqrt{2^{m+2}}}&\sum_{\y\in\F^{m+2}}e^{2\pi i P(\x, \y)}\ket{f(\x, \y)} \\
		&= \begin{cases}
			\frac{1}{\sqrt{2^{m+2}}}\sum_{\y\in\F^{m+2}}
				e^{2\pi i\left(\frac{1}{2}y_i(y_j + Q(\x, \y)) + R(\x, \y)\right)}\ket{f(\x, \y)}
				& \text{ if } x = 0 \\
			\frac{1}{\sqrt{2^{m+2}}}\sum_{\y\in\F^{m+2}}
				e^{2\pi i\left(\frac{1}{2}y_j(y_i + Q'(\x, \y)) + R'(\x, \y)\right)}\ket{f(\x, \y)}
				& \text{ if } x = 1
			\end{cases} \\
		&= \begin{cases}
			\frac{1}{\sqrt{2^{m}}}\sum_{\y\in\F^{m}}
				e^{2\pi i\left(R[y_j\gets \lift{Q}]\right)(\x, \y)}\ket{\left(f[y_j\gets Q]\right)(\x, \y)}
				& \text{ if } x = 0 \\
			\frac{1}{\sqrt{2^{m}}}\sum_{\y\in\F^{m}}
				e^{2\pi i\left(R'[y_i\gets \lift{Q'}]\right)(\x, \y)}\ket{\left(f[y_i\gets Q']\right)(\x, \y)}
				& \text{ if } x = 1
			\end{cases} \tag*{\text{by \textsf{[HH]} and \textsf{[Elim]}}} \\
		&= \frac{1}{\sqrt{2^{m}}}\sum_{\y\in\F^{m}}
				e^{2\pi i\left((1 - x)R[y_j\gets \lift{Q}] + xR'[y_i\gets \lift{Q'}]\right)(\x, \y)}\ket{f(\x, \y)}
				\qquad \text{since $y_i, y_j\notin f$}
\end{align*}
\end{proof}

\section{Reduction examples}\label{app:examples}

In this appendix we give further examples of the use of our reduction rules to prove circuit identities.

\begin{example}
To show the use of the \textsf{[$\omega$]} rule, we reduce the circuit $(SH)^3$ to the $\omega$ constant.

\begin{align*}
	\textsf{(SH)}^3: \ket{x} 
		&\mapsto \frac{1}{\sqrt{2}^3}\sum_{y_1, y_2, y_3\in\F}
    			e^{2\pi i \frac{1}{8}(4xy_1 + 6y_1 + 4y_1y_2 + 6y_2 + 4y_2y_3 + 6y_3 + 1)}\ket{y_3} \\
		&\mapsto \frac{1}{\sqrt{2}^3}\sum_{y_1, y_2, y_3\in\F}
    		    e^{2\pi i \left(\frac{1}{2}(\frac{1}{2}y_1 + y_1(y_2 \oplus 1 \oplus x)) + \frac{1}{8}(6y_2 + 4y_2y_3 + 6y_3 + 1)\right)}\ket{y_3} \\
		&\mapsto \frac{1}{\sqrt{2}^2}\sum_{y_2, y_3\in\F}
    			e^{2\pi i\frac{1}{8}(1 -2(y2 + 1 + x - 2y_2 - 2x - 2y_2x + 4y_2x) + 6y_2 + 4y_2y_3 + 6y_3 + 1)}\ket{y_3} 
    			\tag*{\textsf{[$\omega$]}} \\
		&\mapsto \frac{1}{\sqrt{2}^2}\sum_{y_2, y_3\in\F} e^{2\pi i\frac{1}{8}(2x + 4y_2x + 4y_2y_3 + 6y_3)}\ket{y_3} \\
		&\mapsto e^{2\pi i\frac{1}{8}(2x  + 6x)}\ket{x}  \tag*{\textsf{[HH, Elim]}} \\
		&\mapsto \omega\ket{x}.
\end{align*}
\end{example}

\begin{example}
The one-bit full adder has the reversible path-sum specification
\[
	\ket{x_1x_2x_3x_4} \mapsto \ket{x_1(x_1\oplus x_2)(x_1\oplus x_2\oplus x_3)(x_1x_2 \oplus x_1x_3 \oplus x_2x_3 \oplus x_4)}.
\] 
The implementation below over Clifford+$T$ was obtained by using the Reed-Muller decoding method of \cite{am16} to reduce the number of $T$ gates from the standard implementation using two Toffoli gates.
\[
\Qcircuit @C=1em @R=.7em { 
	& \qw & \gate{P} & \ctrl{1} & \targ       & \gate{T} & \ctrl{1} & \targ      & \qw       & \gate{T} & \targ       & \gate{T} & \targ      & \gate{P}
		& \targ & \qw & \qw & \ctrl{1} & \qw & \qw \\
	& \qw & \gate{P} & \targ      & \qw         & \gate{T} & \targ     & \qw        & \ctrl{2} & \gate{T} & \ctrl{-1} & \qw         & \qw        & \qw
		& \qw & \ctrl{2} & \targ & \targ & \ctrl{1} & \qw \\
	& \qw & \gate{P} & \qw        & \ctrl{-2} & \qw          & \qw       & \qw        & \qw       & \qw         & \qw         & \qw         & \ctrl{-2} & \qw
		& \qw & \qw & \ctrl{-1} & \qw & \targ & \qw \\
	& \gate{H} & \gate{P} & \qw        & \qw         & \gate{T} & \qw       & \ctrl{-3} & \targ    & \gate{T} & \qw         & \qw         & \qw        & \qw
		& \ctrl{-3} & \targ & \qw & \qw & \gate{H} & \qw
}
\]
We can verify that this circuit implements the one-bit adder specification as follows:
\begin{align*}
	 \ket{x_1x_2x_3x_4} 
		&\mapsto \frac{1}{\sqrt{2}^2}\sum_{y_1, y_2\in\F}
    			e^{2\pi i \frac{1}{2}(y_1y_2 + y_1x_1x_2 + y_1x_1x_3 + y_1x_2x_3 + y_1x_4)}\ket{x_1(x_1\oplus x_2)(x_1\oplus x_2\oplus x_3)y_2} \\
		&\mapsto \frac{1}{\sqrt{2}^2}\sum_{y_1, y_2\in\F}
    		    e^{2\pi i \frac{1}{2}y_1(y_2 + x_1x_2 + x_1x_3 + x_2x_3 + x_4)}\ket{x_1(x_1\oplus x_2)(x_1\oplus x_2\oplus x_3)y_2} \\
		&\mapsto \ket{x_1(x_1\oplus x_2)(x_1\oplus x_2\oplus x_3)(x_1x_2 \oplus x_1x_3 \oplus x_2x_3 \oplus x_4)} 
    				\tag*{\textsf{[HH, Elim]}}
\end{align*}
\end{example}

\end{document}